\documentclass[twoside,leqno,twocolumn]{article}  
\usepackage{graphicx}
\usepackage{amssymb}
\usepackage{mathrsfs}
\usepackage{amsthm}
\usepackage[boxed,section]{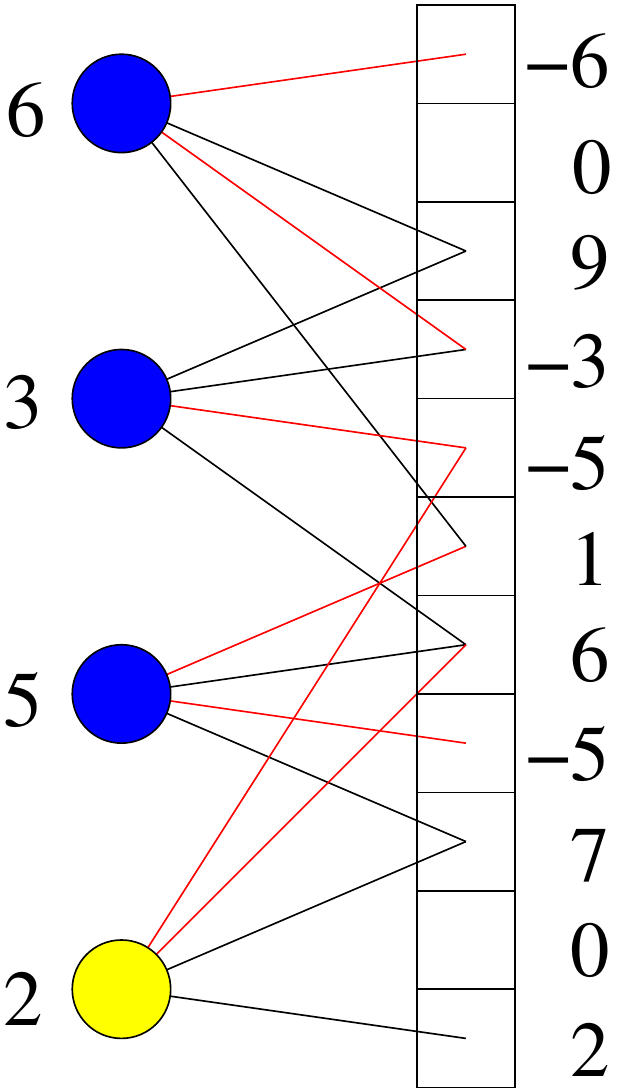}
\usepackage{algorithmicx}
\usepackage{algpseudocode}
\usepackage{amsmath}
\usepackage{subfig}
\usepackage{url}
\usepackage[vmargin=1in, hmargin=0.77in]{geometry}

\newtheorem{theorem}{Theorem}[section]
\newtheorem{definition}{Definition}
\newtheorem{lemma}{Lemma}[section]
\newtheorem{corollary}{Corollary}[section]

\newcommand{\abs}[1]{\left|#1\right|}

\newcommand{\norm}[2]{\left \|#2\right \|_{#1}}
\newcommand{\err}[3]{\mathrm{Err}_{#1}(#3, #2)}

\newcommand{\smbox}[1]{\mbox{\scriptsize #1}}

\DeclareMathOperator*{\median}{median}
\DeclareMathOperator{\supp}{supp}
\DeclareMathOperator{\E}{E}

\def\G{\mathbb{G}}
\def\R{\mathbb{R}}
\def\Z{\mathbb{Z}}
\def\eps{\epsilon}

\begin{document}

\title{\Large Efficient Sketches for the Set Query Problem\thanks{This
    research has been supported in part by the David and Lucille
    Packard Fellowship, MADALGO (Center for Massive Data Algorithmics,
    funded by the Danish National Research Association), NSF grant
    CCF-0728645, a Cisco Fellowship, and the NSF Graduate Research
    Fellowship Program.}}
\author{Eric Price\thanks{MIT CSAIL}}
\date{}

\maketitle


\begin{abstract} \small\baselineskip=9pt
  We develop an algorithm for estimating the values of a vector $x \in
  \R^n$ over a support $S$ of size $k$ from a randomized sparse binary
  linear sketch $Ax$ of size $O(k)$.  Given $Ax$ and $S$, we can
  recover $x'$ with $\norm{2}{x' - x_S} \leq \eps \norm{2}{x - x_S}$
  with probability at least $1 - k^{-\Omega(1)}$.  The recovery takes
  $O(k)$ time.

  While interesting in its own right, this primitive also has a number
  of applications.  For example, we can:

  \begin{enumerate}
  \item Improve the linear $k$-sparse recovery of heavy hitters in
    Zipfian distributions with $O(k \log n)$ space from a $1+\eps$
    approximation to a $1 + o(1)$ approximation, giving the first such
    approximation in $O(k \log n)$ space when $k \leq O(n^{1-\eps})$.
  \item Recover block-sparse vectors with $O(k)$ space and a $1+\eps$
    approximation.  Previous algorithms required either $\omega(k)$ space
    or $\omega(1)$ approximation.
  \end{enumerate}

\end{abstract}

\section{Introduction}\label{sec:introduction}

In recent years, a new ``linear'' approach for obtaining a succinct
approximate representation of $n$-dimensional vectors (or signals) has
been discovered.  For any signal $x$, the representation is equal to
$Ax$, where $A$ is an $m \times n$ matrix, or possibly a random
variable chosen from some distribution over such matrices.  The vector
$Ax$ is often referred to as the {\em measurement vector} or {\em
  linear sketch} of $x$.  Although $m$ is typically much smaller than
$n$, the sketch $Ax$ often contains plenty of useful information about
the signal $x$.

A particularly useful and well-studied problem is that of {\em stable
  sparse recovery}.  The problem is typically defined as follows: for
some norm parameters $p$ and $q$ and an approximation factor $C>0$,
given $Ax$, recover a vector $x'$ such that
\begin{align}
\label{e:lplq}
\norm{p}{x' - x} &\le C \cdot \err{q}{k}{x},\\
\notag\mbox{\ where\ } \err{q}{k}{x} &=  \min_{k\mbox{-sparse } \hat{x}}  \norm{q}{\hat{x} - x}
\end{align}
where we say that $\hat{x}$ is $k$-sparse if it has at most $k$
non-zero coordinates.  Sparse recovery has applications to numerous
areas such as data stream computing~\cite{Muthu:survey, I-SSS} and
compressed sensing~\cite{CRT06:Stable-Signal,
  Don06:Compressed-Sensing}, notably for constructing imaging systems
that acquire images directly in compressed
form~(e.g.,~\cite{DDTLTKB,Rom}).  The problem has been a subject of
extensive study over the last several years, with the goal of
designing schemes that enjoy good ``compression rate'' (i.e., low
values of $m$) as well as good algorithmic properties (i.e., low
encoding and recovery times).  It is known that there exist
distributions of matrices $A$ and associated recovery algorithms that
for any $x$ with high probability produce approximations $x'$
satisfying Equation~\eqref{e:lplq} with $\ell_p=\ell_q=\ell_2$,
constant approximation factor $C=1+\eps$, and sketch length $m=O(k \log
(n/k))$;\footnote{In particular, a random Gaussian matrix~\cite{CD04}
  or a random sparse binary matrix~(\cite{glps}, building on
  \cite{ccf,CM03b}) has this property with overwhelming probability.
  See~\cite{GI} for an overview.} it is also known that this sketch
length is asymptotically optimal~\cite{dipw,FPRU}.  Similar results
for other combinations of $\ell_p$/$\ell_q$ norms are known as well.

Because it is impossible to improve on the sketch size in the general
sparse recovery problem, recently there has been a large body of work
on more restricted problems that are amenable to more efficient
solutions.  This includes \emph{model-based compressive
  sensing}~\cite{BCDH}, which imposes additional constraints (or
\emph{models}) on $x$ beyond near-sparsity.  Examples of models
include \emph{block sparsity}, where the large coefficients tend to
cluster together in blocks~\cite{BCDH,EKB}; \emph{tree sparsity},
where the large coefficients form a rooted, connected tree
structure~\cite{BCDH,LD}; and being \emph{Zipfian}, where we require
that the histogram of coefficient size follow a \emph{Zipfian} (or
\emph{power law}) distribution.

A sparse recovery algorithm needs to perform two tasks: locating the
large coefficients of $x$ and estimating their value.  Existing
algorithms perform both tasks at the same time.  In contrast, we
propose decoupling these tasks.  In models of interest, including
Zipfian signals and block-sparse signals, existing techniques can
locate the large coefficients more efficiently or accurately than they
can estimate them.  Prior to this work, however, estimating the large
coefficients after finding them had no better solution than the
general sparse recovery problem.  We fill this gap by giving an
optimal method for estimating the values of the large coefficients
after locating them.  We refer to this task as the \emph{Set Query
  Problem}\footnote{The term ``set query'' is in contrast to ``point
  query,'' used in e.g.~\cite{CM03b} for estimation of a single coordinate.}.

\textbf{Main result.}  (Set Query Algorithm.) We give a randomized
distribution over $O(k) \times n$ binary matrices $A$ such that, for
any vector $x \in \R^n$ and set $S \subseteq \{1, \dotsc, n\}$ with
$\abs{S} = k$, we can recover an $x'$ from $Ax + \nu$ and $S$ with
\[
\norm{2}{x' - x_S} \leq \eps ( \norm{2}{x - x_S} + \norm{2}{\nu})
\]
where $x_S \in \R^n$ equals $x$ over $S$ and zero elsewhere.  The
matrix $A$ has $O(1)$ non-zero entries per column, recovery succeeds
with probability $1 - k^{-\Omega(1)}$, and recovery takes $O(k)$ time.
This can be achieved for arbitrarily small $\eps > 0$, using
$O(k/\eps^2)$ rows.  We achieve a similar result in the $\ell_1$ norm.

The set query problem is useful in scenarios when, given a sketch of
$x$, we have some alternative methods for discovering a ``good''
support of an approximation to $x$. This is the case, e.g., in
block-sparse recovery, where (as we show in this paper) it is possible
to identify ``heavy'' blocks using other methods. It is also a natural
problem in itself. In particular, it generalizes the well-studied
\emph{point query problem}~\cite{CM03b}, which considers the case that
$S$ is a singleton. We note that, although the set query problem for
sets of size $k$ can be reduced to $k$ instances of the point query
problem, this reduction is less space-efficient than the algorithm we
propose, as elaborated below.

\textbf{Techniques.}
Our method is related to existing sparse recovery algorithms,
including Count-Sketch~\cite{ccf} and
Count-Min~\cite{CM03b}.  In fact, our sketch matrix $A$ is almost
identical to the one used in Count-Sketch---each column of $A$ has $d$
random locations out of $O(kd)$ each independently set to $\pm 1$, and
the columns are independently generated.  We can view such a matrix as
``hashing'' each coordinate to $d$ ``buckets'' out of $O(kd)$.  The
difference is that the previous algorithms require $O(k\log k)$
measurements to achieve our error bound (and $d = O(\log k)$), while
we only need $O(k)$ measurements and $d = O(1)$.

We overcome two obstacles to bring $d$ down to $O(1)$ and still
achieve the error bound with high probability\footnote{In this paper,
  ``high probability'' means probability at least $1 - 1/k^{c}$ for
  some constant $c > 0$.}.  First, in order to estimate the coordinates
$x_i$, we need a more elaborate method than, say, taking the median of
the buckets that $i$ was hashed into.  This is because, with constant
probability, all such buckets might contain some other elements from
$S$ (be ``heavy'') and therefore using {\em any} of them as an
estimator for $y_i$ would result in too much error.  Since, for
super-constant values of $|S|$, it is highly likely that such an event
will occur for at least one $i \in S$, it follows that this type of
estimation results in large error.

We solve this issue by using our knowledge of $S$.  We know when a
bucket is ``corrupted'' (that is, contains more than one element of
$S$), so we only estimate coordinates that lie in a large number of
uncorrupted buckets.  Once we estimate a coordinate, we subtract our
estimation of its value from the buckets it is contained in.  This
potentially decreases the number of corrupted buckets, allowing us to
estimate more coordinates.  We show that, with high probability, this
procedure can continue until it estimates every coordinate in $S$.

The other issue with the previous algorithms is that their analysis of
their probability of success does not depend on $k$.  This means that,
even if the ``head'' did not interfere, their chance of success would
be a constant (like $1 - 2^{-\Omega(d)}$) rather than high probability in
$k$ (meaning $1 - k^{-\Omega(d)}$).  We show that the errors in our estimates
of coordinates have low covariance, which allows us to apply
Chebyshev's inequality to get that the total error is concentrated
around the mean with high probability.

\textbf{Related work.}
A similar recovery algorithm (with $d=2$) has been analyzed and
applied in a streaming context in~\cite{EG}. However, in that paper
the authors only consider the case where the vector $y$ is
$k$-sparse. In that case, the termination property alone suffices,
since there is no error to bound. Furthermore, because $d=2$ they only
achieve a constant probability of success.  In this paper we consider
general vectors $y$ so we need to make sure the error remains bounded,
and we achieve a high probability of success.

The recovery procedure also has similarities to recovering LDPCs using
belief propagation, especially over the binary erasure channel.  The
similarities are strongest for exact recovery of $k$-sparse $y$; our
method for bounding the error from noise is quite different.


\textbf{Applications.} Our efficient solution to the set query problem
can be combined with existing techniques to achieve sparse recovery
under several models.

We say that a vector $x$ follows a \emph{Zipfian} or \emph{power law}
distribution with parameter $\alpha$ if $\abs{x_{r(i)}} =
\Theta(\abs{x_{r(1)}}i^{-\alpha})$ where $r(i)$ is the location of the
$i$th largest coefficient in $x$.  When $\alpha > 1/2$, $x$ is well
approximated in the $\ell_2$ norm by its sparse approximation.
Because a wide variety of real world signals follow power law
distributions (\cite{M04,broder}), this notion (related to
``compressibility''\footnote{A signal is ``compressible'' when
  $\abs{x_{r(i)}} = O(\abs{x_{r(1)}}i^{-\alpha})$ rather than
  $\Theta(\abs{x_{r(1)}}i^{-\alpha})$~\cite{CT06}. This allows it to
  decay very quickly then stop decaying for a while; we require that
  the decay be continuous.}) is often considered to be much of the
reason why sparse recovery is interesting \cite{CT06,cevher}.  Prior
to this work, sparse recovery of power law distributions has only been
solved via general sparse recovery methods: $(1+\eps)\err{2}{k}{x}$
error in $O(k\log (n/k))$ measurements.

However, locating the large coefficients in a power law distribution
has long been easier than in a general distribution.  Using $O(k \log
n)$ measurements, the Count-Sketch algorithm~\cite{ccf} can produce a
candidate set $S\subseteq \{1, \dotsc, b\}$ with $\abs{S} = O(k)$ that
includes all of the top $k$ positions in a power law distribution with
high probability (if $\alpha > 1/2$).  We can then apply our set query
algorithm to recover an approximation $x'$ to $x_S$.  Because we
already are using $O(k \log n)$ measurements on Count-Sketch, we use
$O(k \log n)$ rather than $O(k)$ measurements in the set query
algorithm to get an $\eps / \sqrt{\log n}$ rather than $\eps$
approximation.  This lets us recover a $k$-sparse $x'$ with $O(k \log
n)$ measurements with
\[
\norm{2}{x' - x} \leq \left(1 + \frac{\eps}{\sqrt{\log n}}\right) \err{2}{k}{x}.
\]
This is especially interesting in the common regime where $k < n^{1 -
  c}$ for some constant $c > 0$.  Then no previous algorithms achieve
better than a $(1 + \eps)$ approximation with $O(k \log n)$
measurements, and the lower bound in~\cite{dipw} shows that any $O(1)$
approximation requires $\Omega(k \log n)$ measurements\footnote{The
  lower bound only applies to geometric distributions, not Zipfian
  ones.  However, our algorithm applies to more general
  \emph{sub-Zipfian} distributions (defined in
  Section~\ref{sec:zipfian}), which includes both.}.  This means at
$\Theta(k \log n)$ measurements, the best approximation changes from
$\omega(1)$ to $1 + o(1)$.

Another application is that of finding {\em block-sparse}
approximations.  In this application, the coordinate set $\{1 \ldots
n\}$ is partitioned into $n/b$ blocks, each of length $b$.  We define
a $(k, b)$-block-sparse vector to be a vector where all non-zero
elements are contained in at most $k/b$ blocks.  An example of
block-sparse data is time series data from $n/b$ locations over $b$
time steps, where only $k/b$ locations are ``active''.  We can define
\[ \err{2}{k,b}{x} = \min_{(k, b)-\smbox{block-sparse }\hat{x}}\norm{2}{x-\hat{x}}.
\]

The block-sparse recovery problem can now be formulated analogously to
Equation~\ref{e:lplq}.  Since the formulation imposes restrictions on
the sparsity patterns, it is natural to expect that one can perform
sparse recovery from fewer than $O(k \log (n/k))$ measurements needed
in the general case.  Because of that reason and the prevalence of
approximately block-sparse signals, the problem of stable recovery of
variants of block-sparse approximations has been recently a subject of
extensive research~(e.g., see~\cite{EB,SPH,BCDH,CIHB}). The state of
the art algorithm has been given in~\cite{BCDH}, who gave a
probabilistic construction of a single $m \times n$ matrix $A$, with
$m=O(k +\frac{k}{b} \log n$), and an $n \log^{O(1)} n$-time algorithm
for performing the block-sparse recovery in the $\ell_1$ norm (as well
as other variants).  If the blocks have size $\Omega(\log n)$, the
algorithm uses only $O(k)$ measurements, which is a substantial
improvement over the general bound.  However, the approximation factor
$C$ guaranteed by that algorithm was super-constant.

In this paper, we provide a distribution over matrices $A$, with
$m=O(k + \frac{k}{b} \log n)$, which enables solving this problem with
a {\em constant} approximation factor and in the $\ell_2$ norm, with
high probability.  As with Zipfian distributions, first one algorithm
tells us where to find the heavy hitters and then the set query
algorithm estimates their values.  In this case, we modify the
algorithm of~\cite{block-heavy-hitters} to find {\em block heavy
  hitters}, which enables us to find the support of the $\frac{k}{b}$
``most significant blocks'' using $O(\frac{k}{b} \log n)$
measurements.  The essence is to perform dimensionality reduction of
each block from $b$ to $O(\log n)$ dimensions, then estimate the
result with a linear hash table.  For each block, most of the
projections are estimated pretty well, so the median is a good
estimator of the block's norm.  Once the support is identified, we can
recover the coefficients using the set query algorithm.

\section{Preliminaries}

\subsection{Notation}

For $n \in \Z^+$, we denote $\{1, \dotsc, n\}$ by $[n]$.  Suppose $x
\in \R^n$.  Then for $i \in [n]$, $x_i \in \R$ denotes the value of
the $i$-th coordinate in $x$.  As an exception, $e_i \in \R^n$ denotes
the elementary unit vector with a one at position $i$.  For $S
\subseteq [n]$, $x_S$ denotes the vector $x' \in R^n$ given by $x'_i =
x_i$ if $i \in S$, and $x'_i = 0$ otherwise.  We use $\supp(x)$ to
denote the support of $x$.  We use upper case letters to denote sets,
matrices, and random distributions.  We use lower case letters for
scalars and vectors.

\subsection{Negative Association}

This paper would like to make a claim of the form ``We have $k$
observations each of whose error has small expectation and variance.
Therefore the average error is small with high probability in $k$.''
If the errors were independent this would be immediate from
Chebyshev's inequality, but our errors depend on each other.
Fortunately, our errors have some tendency to behave even better than
if they were independent: the more noise that appears in one
coordinate, the less remains to land in other coordinates.  We use
\emph{negative dependence} to refer to this general class of behavior.
The specific forms of negative dependence we use are \emph{negative
  association} and \emph{approximate negative correlation}; see
Appendix~\ref{app:negative} for details on these notions.

\section{Set-Query Algorithm}

\begin{theorem}\label{main-theorem}
  There is a randomized sparse binary sketch matrix $A$ and recovery
  algorithm $\mathscr{A}$, such that for any $x \in \R^n$, $S
  \subseteq [n]$ with $\abs{S} = k$, $x' = \mathscr{A}(Ax + \nu, S)
  \in \R^n$ has $\supp(x') \subseteq S$ and
  \[
  \norm{2}{x' - x_S} \leq \eps(\norm{2}{x - x_S} + \norm{2}{\nu})
  \]
  with probability at least $1 - 1/k^c$.  $A$ has
  $O(\frac{c}{\eps^2}k)$ rows and $O(c)$ non-zero entries per column,
  and $\mathscr{A}$ runs in $O(ck)$ time.

  One can achieve $\norm{1}{x' - x_S} \leq \eps(\norm{1}{x - x_S} +
  \norm{1}{\nu})$ under the same conditions, but with only
  $O(\frac{c}{\eps}k)$ rows.
\end{theorem}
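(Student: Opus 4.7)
The plan is to keep the Count-Sketch-like sketch structure but exploit knowledge of $S$ to drive a \emph{peeling} recovery. Build $A$ from $d = \Theta(c)$ independent layers, each hashing $[n]$ into $B = \Theta(k/\eps^2)$ buckets with i.i.d.\ $\pm 1$ signs; concatenation gives $O(c)$ nonzeros per column and $dB = O(ck/\eps^2)$ rows. Call a bucket \emph{clean for} $i \in S$ if $i$ hashes there and no other currently-unpeeled $j \in S$ does. The recovery algorithm maintains the current bucket contents, repeatedly selects some $i \in S$ with at least (say) $d/2$ clean buckets, sets $x'_i$ to the sign-corrected mean of those buckets, and subtracts $x'_i \sigma_\ell(i)$ from each bucket containing $i$. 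With a priority queue on the clean-bucket count this runs in $O(ck)$ time.

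\textbf{Termination.} Viewing the random hash placements as a bipartite graph $G$ on $S$ and the $dB$ buckets, peeling succeeds iff $G$ has empty $(d/2)$-core, in the standard IBLT/LDPC peeling sense. For $B = \Omega(k)$ and $d$ a sufficiently large constant depending on $c$, classical random-hypergraph-core bounds (of Molloy type) give that peeling succeeds with probability at least $1 - k^{-c}$. I would invoke this essentially as a black box.

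\textbf{Error bound.} Condition on peeling success and set $e_i = x'_i - x_i$. A clean bucket used to estimate $x_i$ contains $\sigma_\ell(i)\, x_i$ plus a signed sum of (a) entries of $x - x_S$ hashed into that bucket, (b) the corresponding entry of $\nu$, and (c) residual errors $x_j - x'_j$ from previously peeled $j$. The random signs zero out the mean of each noise piece, so from the ``fresh'' noise in (a)--(b) alone, $\E[e_i^2 \mid \text{hashes}] = O\bigl((\norm{2}{x - x_S}^2 + \norm{2}{\nu}^2)/B\bigr)$. Summing over the $k$ coordinates in $S$ gives expected total squared error $O(\eps^2)(\norm{2}{x - x_S}^2 + \norm{2}{\nu}^2)$. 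To upgrade this to a $1 - k^{-\Omega(c)}$ tail bound I invoke negative association as sketched in Section~2.2: balls-into-bins hashing of the noise mass is negatively associated, squared per-bucket noise is a monotone function of those loads, and hence the individual $e_i^2$ have nonpositive pairwise covariance; Chebyshev then yields the needed concentration.

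\textbf{Main obstacle and $\ell_1$ variant.} The core technical difficulty is the coupling introduced by the residual terms in (c): later estimates depend on earlier ones, breaking exact independence and complicating the negative-association argument. I plan to handle this by an induction on the peeling order, showing that each residual from peeling $j$ propagates to at most $O(d)$ later estimates (since $j$ subtracts into only its $d$ buckets and each such bucket serves as the clean bucket of at most one later $i$), so propagated error inflates fresh error by only a constant factor. A union bound over the termination and concentration failure events then yields the $\ell_2$ conclusion with probability $1 - 1/k^c$. The $\ell_1$ statement is obtained by replacing Chebyshev with Markov's inequality: the expected $\ell_1$ error scales linearly (rather than quadratically) in $1/B$, so $B = O(k/\eps)$ buckets per layer already suffice.
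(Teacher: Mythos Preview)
Your skeleton matches the paper in spirit (Count-Sketch-style hashing plus peeling that exploits knowledge of $S$), but two of the load-bearing steps do not go through as written, and the paper's actual argument differs in exactly those places.

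\textbf{Propagation.} Your proposed control on the residual terms in (c) is the real gap. You argue that the error $e_j$ from peeling $j$ enters at most $O(d)$ later estimates, and conclude that propagation inflates fresh error by a constant factor. That inference is not valid: each of those $O(d)$ later estimates has its own residual, which propagates to $O(d)$ further estimates, and so on; bounding the out-degree of the propagation DAG does not bound the number of directed paths, which is what governs the blowup. With a $d/2$-clean-bucket threshold, the out-degree of each node in this DAG can be as large as $d/2$, and nothing in your plan rules out long chains or many paths. The paper avoids this entirely by a much more restrictive peeling rule: it only peels $i$ when at least $d-1$ (or, once per component, $d-2$) of $i$'s cells are isolated. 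Under that rule the out-degree in the propagation DAG is at most $1$ except for a single node per component with out-degree $2$, so the number of paths between any two nodes is at most $2$ (Lemma~3.3). The price is that peeling might get stuck; the paper shows it does not by proving that when $w \geq 2d(d-1)k$ the random hypergraph is, with probability $1-O(1/k)$, a union of hypertrees and unicyclic components, each of which can always be peeled under the $d-1$/$d-2$ rule. This then yields the clean decomposition $\norm{2}{x'-x_S}^2 \leq 4\sum_i D_i^2 Y_i^2$, where $Y_i$ is the ``fresh'' point error and $D_i$ is the size of $i$'s component; the propagation is absorbed into the factor $D_i^2$, whose moments are then bounded separately.

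\textbf{Concentration and the route to $1-k^{-c}$.} Your negative-association step is also too quick. The per-bucket noise $V_q$ is a \emph{signed} sum of tail entries, so it is not a monotone function of the balls-in-bins loads, and the $Y_i^2$ are not NA as stated. The paper handles this by dominating $V_q^2$ by $U_q = W_q F_q$, where $W_q = \E_D[V_q^2]$ is NA and $F_q$ is an independent heavy-tailed multiplier coming from Markov over the signs; only then are the resulting $Z_i = \median_{q\in L_i} U_q$ negatively associated. Even with that in hand, the paper does \emph{not} get $1-k^{-c}$ directly: Chebyshev on $\sum_i Z_i D_i^2$ gives only $1 - O(k^{-1/3})$ (Lemma~3.5), because the covariance bound on the $D_i^2$ is only $\tilde O(k^{-1/2})$. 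The $1-k^{-c}$ guarantee comes from running $O(c)$ independent copies of the whole sketch and taking the coordinate-wise median of the outputs. Your plan to get $k^{-c}$ in one shot by taking $d=\Theta(c)$ and invoking core bounds plus NA-Chernoff is not supported by the analysis you sketch. Finally, for the $\ell_1$ claim, replacing Chebyshev by Markov cannot give high probability; the paper's $\ell_1$ proof (Appendix~B) still uses Chebyshev on $\sum_i D_i|Y_i|$ and the same parallel-repetition boost, with the $1/\eps$ (versus $1/\eps^2$) saving coming from the first-moment bound $\E[U_q'] \leq \frac{\eps}{k}(\norm{1}{x-x_S}+\norm{1}{\nu})$.
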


We will first show Theorem~\ref{main-theorem} for a constant $c = 1/3$
rather than for general $c$.  Parallel repetition gives the theorem
for general $c$, as described in Section~\ref{sec:wrapping}.  We
will also only show it with entries of $A$ being in $\{0, 1, -1\}$.
By splitting each row in two, one for the positive and one for the
negative entries, we get a binary matrix with the same properties.
The paper focuses on the more difficult $\ell_2$ result; see
Appendix~\ref{app:l1} for details on the $\ell_1$ result.

\subsection{Intuition}\label{sec:intuition}

We call $x_S$ the ``head'' and $x - x_S$ the ``tail.''  The head
probably contains the heavy hitters, with much more mass than the tail
of the distribution.  We would like to estimate $x_S$ with zero error
from the head and small error from the tail with high probability.

Our algorithm is related to the standard Count-Sketch~\cite{ccf} and
Count-Min~\cite{CM03b} algorithms.  In order to point out the
differences, let us examine how they perform on this task.  These
algorithms show that hashing into a single $w=O(k)$ sized hash table
is good in the sense that each point $x_i$ has:
\begin{enumerate}
\item Zero error from the head with constant probability (namely $1-\frac{k}{w}$).
\item A small amount of error from the tail in expectation (and hence
  with constant probability).
\end{enumerate}
They then iterate this procedure $d$ times and take the median, so
that each estimate has small error with probability $1 -
2^{-\Omega(d)}$.  With $d = O(\log k)$, we get that all $k$ estimates
in $S$ are good with $O(k \log k)$ measurements with high probability
in $k$.  With fewer measurements, however, some $x_i$ will probably
have error from the head.  If the head is much larger than the tail
(such as when the tail is zero), this is a major problem.
Furthermore, with $O(k)$ measurements the error from the tail would be
small only in expectation, not with high probability.

We make three observations that allow us to use only $O(k)$
measurements to estimate $x_S$ with error relative to the tail with
high probability in $k$.

\begin{enumerate}
\item The total error from the tail over a support of size $k$ is
  concentrated more strongly than the error at a single point:
  the error probability drops as $k^{-\Omega(d)}$ rather than $2^{-\Omega(d)}$.
\item The error from the head can be avoided if one knows where the head
  is, by modifying the recovery algorithm.
\item The error from the tail remains concentrated after modifying the
  recovery algorithm.
\end{enumerate}

For simplicity this paper does not directly show (1), only (2) and
(3).  The modification to the algorithm to achieve (2) is quite
natural, and described in detail and illustrated in
Section~\ref{sec:algorithm}.  Rather than estimate every coordinate in
$S$ immediately, we only estimate those coordinates which mostly do
not overlap with other coordinates in $S$.  In particular, we only
estimate $x_i$ as the median of at least $d-2$ positions that are not
in the image of $S \setminus \{i\}$.  Once we learn $x_i$, we can
subtract $Ax_ie_i$ from the observed $Ax$ and repeat on $A(x -
x_ie_i)$ and $S \setminus \{i\}$.  Because we only look at positions
that are in the image of only one remaining element of $S$, this
avoids any error from the head.  We show in Section~\ref{sec:exact}
that this algorithm never gets stuck; we can always find some position
that mostly doesn't overlap with the image of the rest of the
remaining support.

We then show that the error from the tail has low expectation, and
that it is strongly concentrated.  We think of the tail as noise
located in each ``cell'' (coordinate in the image space).  We
decompose the error of our result into two parts: the ``point error''
and the ``propagation''.  The point error is error introduced in our
estimate of some $x_i$ based on noise in the cells that we estimate
$x_i$ from, and equals the median of the noise in those cells.  The
``propagation'' is the error that comes from point error in estimating
other coordinates in the same connected component; these errors
propagate through the component as we subtract off incorrect estimates
of each $x_i$.

Section~\ref{sec:totalerror} shows how to decompose the total error in
terms of point errors and the component sizes.  The two following
sections bound the expectation and variance of these two quantities
and show that they obey some notions of negative dependence.  We
combine these errors in Section~\ref{sec:wrapping} to get
Theorem~\ref{main-theorem} with a specific $c$ (namely $c = 1/3$).  We
then use parallel repetition to achieve Theorem~\ref{main-theorem} for
arbitrary $c$.

\subsection{Algorithm}\label{sec:algorithm}

We describe the sketch matrix $A$ and recovery procedure in
Algorithm~\ref{algsupport}.  Unlike Count-Sketch~\cite{ccf} or
Count-Min~\cite{CM03b}, our $A$ is not split into $d$ hash tables of
size $O(k)$.  Instead, it has a single $w = O(d^2k/\eps^2)$ sized hash
table where each coordinate is hashed into $d$ unique positions.  We
can think of $A$ as a random $d$-uniform hypergraph, where the
non-zero entries in each column correspond to the terminals of a
hyperedge.  We say that $A$ is drawn from $\G^d(w, n)$ with random
signs associated with each (hyperedge, terminal) pair.  We do this so
we will be able to apply existing theorems on random hypergraphs.

Figure~\ref{fig:problem} shows an example $Ax$ for a given $x$, and
Figure~\ref{fig:algorithm} demonstrates running the recovery procedure
on this instance.

  \begin{figure}[]
    \centering
    \includegraphics[height=2in]{}
    \caption{An instance of the set query problem.  There are $n$
      vertices on the left, corresponding to $x$, and the table on the
      right represents $Ax$.  Each vertex $i$ on the left maps to $d$
      cells on the right, randomly increasing or decreasing the value
      in each cell by $x_i$.  We represent addition by black lines,
      and subtraction by red lines.  We are told the locations of the
      heavy hitters, which we represent by blue circles; the rest is
      represented by yellow circles.}
    \label{fig:problem}
  \end{figure}
  \begin{figure}[]
    \centering
    \subfloat[][]{\includegraphics[height=2in]{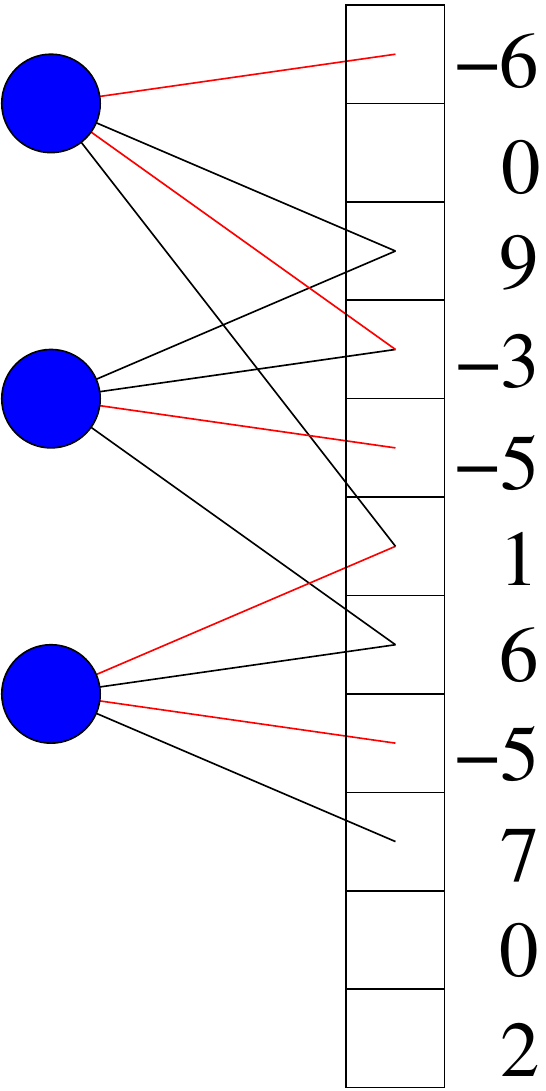}}
    \hspace{24pt}
    \subfloat[][]{\includegraphics[height=2in]{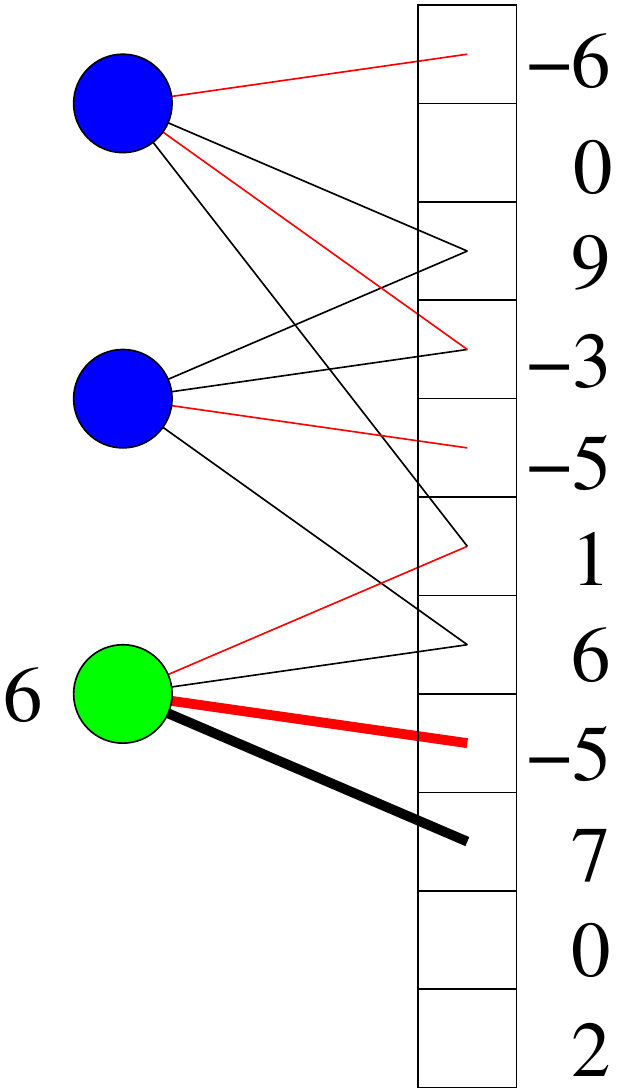}}
    \hspace{24pt}
    \subfloat[][]{\includegraphics[height=2in]{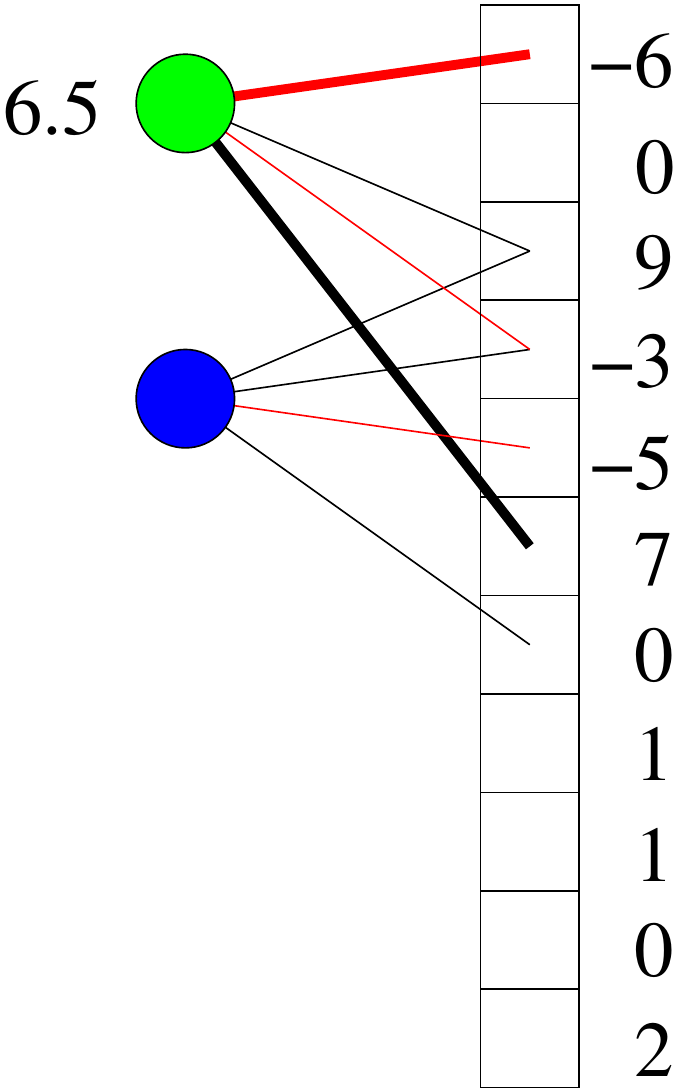}}
    \hspace{24pt}
    \subfloat[][]{\includegraphics[height=2in]{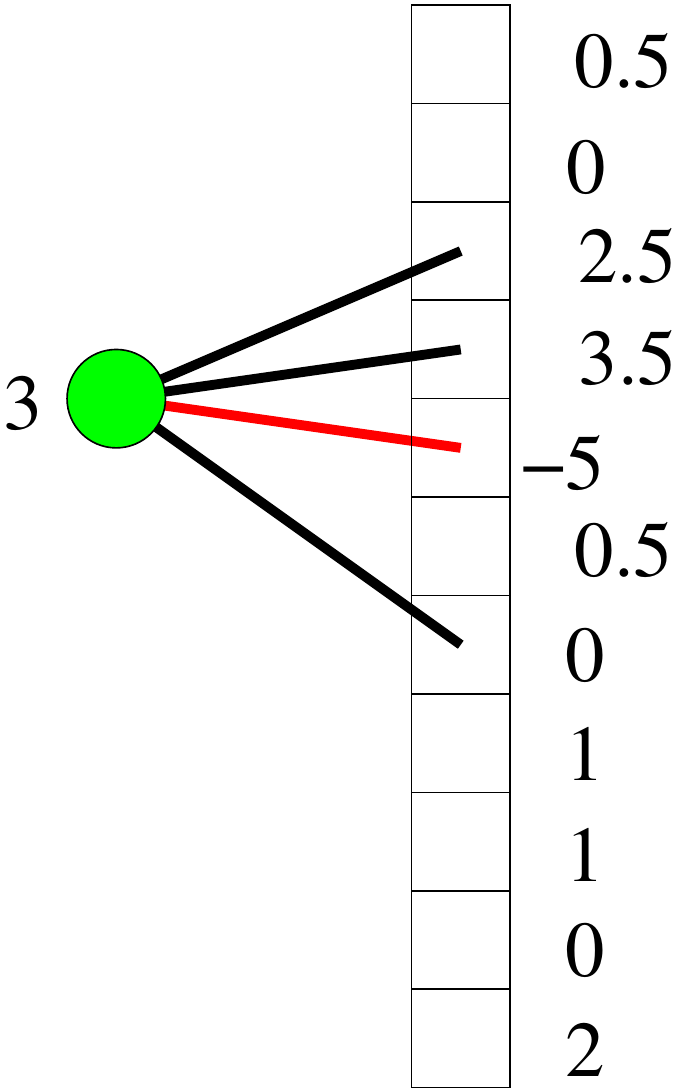}}
    \caption{Example run of the algorithm.  Part (a) shows the state
      as considered by the algorithm: $Ax$ and the graph structure
      corresponding to the given support.  In part (b), the algorithm
      chooses a hyperedge with at least $d-2$ isolated vertices and
      estimates the value as the median of those isolated vertices
      multiplied by the sign of the corresponding edge.  In part (c),
      the image of the first vertex has been removed from $Ax$ and we
      repeat on the smaller graph.  We continue until the entire
      support has been estimated, as in part (d).}
    \label{fig:algorithm}
  \end{figure}

  \begin{algorithm}
    \caption{Recovering a signal given its support.}\label{algsupport}
    \textbf{Definition of sketch matrix $A$.}  For a constant $d$, let
    $A$ be a $w \times n = O(\frac{d^2}{\eps^2} k) \times n$ matrix where each column
    is chosen independently uniformly at random over all exactly
    $d$-sparse columns with entries in $\{-1, 0, 1\}$.  We can think
    of $A$ as the incidence matrix of a random $d$-uniform hypergraph
    with random signs.

  \textbf{Recovery procedure.}
    \begin{algorithmic}[1]
      \Procedure{SetQuery}{$A, S, b$}\Comment{Recover
        approximation $x'$ to $x_S$ from $b=Ax+\nu$}
      \State $T \gets S$
      \While{$\abs{T} > 0$}
      \State Define $P(q) = \{j \mid A_{qj} \neq 0, j \in T\}$ as the set of hyperedges in $T$ that contain $q$.
      \State Define $L_j = \{q \mid A_{qj} \neq 0, \abs{P(q)} = 1\}$ as the set of isolated vertices in hyperedge $j$.
      \State Choose a random $j \in T$ such that $\abs{L_j} \geq d-1$.  If
      this is not possible, find a random $j \in T$ such that $\abs{L_j}
      \geq d-2$.  If neither is possible, abort.
      \State $x'_j \gets \median_{q \in L_j}A_{qj}b_q$
      \State $b \gets b - x'_jAe_j$
      \State $T \gets T \setminus \{j\}$
      \EndWhile
      \State \textbf{return} $x'$
      \EndProcedure
    \end{algorithmic}
  \end{algorithm}

  \begin{lemma}
    Algorithm~\ref{algsupport} runs in time $O(dk)$.
  \end{lemma}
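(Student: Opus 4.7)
The plan is to argue that the while loop runs exactly $k$ times and each iteration can be implemented in $O(d)$ amortized time, after an $O(dk)$ preprocessing step. Since $|T|$ strictly decreases by one on every non-aborting iteration, the number of iterations is at most $k$, so it suffices to exhibit data structures that support the three things the loop body does in $O(d)$ time: (a) find a hyperedge $j \in T$ with $|L_j| \geq d-1$, or failing that with $|L_j| \geq d-2$; (b) compute the median of the $|L_j| \le d$ values $A_{qj} b_q$; and (c) update $b$ and the bookkeeping after removing $j$ from $T$.

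For the preprocessing, I would walk over the $dk$ nonzero entries of $A$ restricted to columns $S$ and build, for each cell $q \in [w]$, a doubly linked list $N(q)$ of the hyperedges in $T$ incident to $q$ together with a counter $|P(q)|$; and for each $j \in S$, a counter $|L_j|$ obtained by scanning $j$'s $d$ cells and checking whether $|P(q)| = 1$. Finally I would place each $j$ in one of two buckets, $B_1 = \{j : |L_j| \ge d-1\}$ and $B_2 = \{j : |L_j| = d-2\}$, represented as arrays-with-positions so that insertion, deletion, and uniform random sampling are $O(1)$. This entire setup takes $O(dk)$ time.

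In each loop iteration, step (a) is $O(1)$ by sampling from $B_1$, else $B_2$, else aborting. Step (b) uses a linear-time selection on at most $d$ values, so $O(d)$. For step (c), subtracting $x'_j A e_j$ from $b$ touches the $d$ cells of $j$, which is $O(d)$. The key point is that the bookkeeping update is also $O(d)$: for each cell $q$ incident to $j$, I delete $j$ from $N(q)$ in $O(1)$ and decrement $|P(q)|$; if the new value is $1$ then the unique remaining $j' \in N(q)$ has $|L_{j'}|$ incremented, and I move $j'$ between $B_1,B_2$ or out of them in $O(1)$ using its stored position; if the new value is $0$ then $q$ was isolated in $j$ and no other hyperedge's $|L|$ changes. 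Only $O(1)$ counter changes are triggered per cell, so $O(d)$ per iteration, giving $O(dk)$ total.

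The only thing that requires care is making every individual bookkeeping operation $O(1)$, which is the reason for the doubly linked lists $N(q)$ (to find the unique surviving neighbor in $O(1)$ when $|P(q)|$ drops to $1$) and for the array-with-position representation of $B_1$ and $B_2$ (to allow $O(1)$ insertion, deletion, and uniform random access). There is no mathematical obstacle; the lemma is a routine implementation argument.
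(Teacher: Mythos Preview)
Your proposal is correct and follows essentially the same approach as the paper: an $O(dk)$ preprocessing pass to build the preimage lists $P(q)$ and the candidate sets (your $B_1,B_2$, the paper's $J_1,J_2$), followed by $k$ iterations each costing $O(d)$ for the median, the subtraction from $b$, and the cascaded updates to $|P(q)|$ and the candidate sets. Your write-up is more explicit about the data structures (doubly linked lists, array-with-position for $O(1)$ random sampling), but the underlying argument is the same.
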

  \begin{proof}
    $A$ has $d$ entries per column.  For each of the at most $dk$ rows
    $q$ in the image of $S$, we can store the preimages $P(q)$.  We
    also keep track of the sets of possible next hyperedges, $J_{i} =
    \{j \mid \abs{L_j} \geq d - i\}$ for $i \in \{1,2\}$.  We can
    compute these in an initial pass in $O(dk)$.  Then in each
    iteration, we remove an element $j \in J_1$ or $J_2$ and update
    $x'_j$, $b$, and $T$ in $O(d)$ time.  We then look at the two or
    fewer non-isolated vertices $q$ in hyperedge $j$, and remove $j$
    from the associated $P(q)$.  If this makes $\abs{P(q)} = 1$,
    we check whether to insert the element in $P(q)$ into the $J_i$.
    Hence the inner loop takes $O(d)$ time, for $O(dk)$ total.
  \end{proof}

\subsection{Exact Recovery}\label{sec:exact}

The random hypergraph $\G^d(w, k)$ of $k$ random $d$-uniform
hyperedges on $w$ vertices is well studied in \cite{randomhypergraph}.
We use their results to show that the algorithm successfully
terminates with high probability, and that most hyperedges are chosen
with at least $d-1$ isolated vertices:

\begin{lemma}\label{thm:termination}
  With probability at least $1 - O(1/k)$, Algorithm~\ref{algsupport}
  terminates without aborting.  Furthermore, in each component at most
  one hyperedge is chosen with only $d-2$ isolated vertices.
\end{lemma}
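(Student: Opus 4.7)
The plan is to exploit the fact that $w = \Theta(d^2 k/\eps^2)$ makes the random hypergraph $\G^d(w,k)$ strongly subcritical (for $d$ a sufficiently large constant, which we are free to choose), so its connected-component structure is highly constrained. I would invoke the random-hypergraph analysis of \cite{randomhypergraph} to conclude that, with probability $1-O(1/k)$, every connected component is either a \emph{hypertree} ($t$ hyperedges spanning exactly $(d-1)t+1$ vertices) or has exactly one ``excess'' hyperedge over a spanning hypertree. A standard union bound over minimal sub-hypergraphs with $\geq 2$ excess edges gives a failure probability that is geometric in $k/w$, so the $1-O(1/k)$ bound follows whenever $w/k$ exceeds a large enough constant, which is precisely our regime.

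Conditioned on this structural event, I would analyze the peeling procedure component by component using pure combinatorics. For a hypertree with $t$ hyperedges, a degree count gives $dt$ vertex–edge incidences distributed over $(d-1)t+1$ vertices, so $\sum_v (\deg(v)-1) = t-1$. Averaging over hyperedges, some edge contains at most one non-isolated vertex and hence satisfies $\abs{L_j}\geq d-1$; removing it leaves a smaller hyperforest, so the whole component peels using only the ``$d-1$'' branch of the algorithm. In a component with exactly one excess edge, the same argument succeeds until the 2-core is exposed; at that moment the 2-core is a single hypercycle, each of whose edges has at most two non-isolated vertices, i.e.\ $\abs{L_j}\geq d-2$. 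The algorithm's fallback branch removes one such edge, breaking the cycle and leaving a hyperforest that peels normally. Hence every component contributes at most one ``$d-2$'' step, the algorithm never executes the ``abort'' line, and the second claim of the lemma holds as well.

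The main obstacle is getting the random-hypergraph bound sharp enough to give $1 - O(1/k)$ rather than just $1 - o(1)$. This requires being careful in the union bound: enumerating sub-hypergraphs by their number of edges $s$ and their excess $\geq 2$, the leading term in $s$ is of order $(k/w)^2\cdot k = O(1/k)$ after accounting for the $\binom{k}{s}$ choice of edges and the probability that they span too few vertices in $\G^d(w,k)$, with subsequent terms decaying geometrically in $k/w$. Once this subcritical concentration estimate is in hand, the peeling argument for both hypertrees and unicyclic components is entirely combinatorial and follows from the degree-counting sketched above.
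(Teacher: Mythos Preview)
Your proposal is essentially the paper's proof. The paper likewise (i) invokes Theorem~4 of \cite{randomhypergraph} to get that with probability $1-O(1/k)$ every component of $\G^d(w,k)$ is a hypertree or unicyclic when $w \geq 2d(d-1)k$, and (ii) shows by exactly your degree-counting argument that hypertrees always have a $(d-1)$-peelable edge, while a unicyclic component either has one or else every edge has exactly two non-isolated vertices (your ``hypercycle'' $2$-core), in which case removing any single edge yields a hypertree. The only cosmetic difference is that the paper quotes the structural bound from \cite{randomhypergraph} as a black box rather than sketching the union bound, and handles the unicyclic case by a direct dichotomy rather than phrasing it as ``peel to the $2$-core''; the content is the same.
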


We will show this by building up a couple lemmas.  We define a
connected hypergraph $H$ with $r$ vertices on $s$ hyperedges to be a
\emph{hypertree} if $r = s(d-1)+1$ and to be \emph{unicyclic} if $r =
s(d-1)$.  Then Theorem~4 of \cite{randomhypergraph} shows that, if the
graph is sufficiently sparse, $\G^d(w, k)$ is probably composed
entirely of hypertrees and unicyclic components.  The precise
statement is as follows\footnote{Their statement of the theorem is
  slightly different.  This is the last equation in their proof of the
  theorem.}:

  \begin{lemma}[Theorem 4 of \cite{randomhypergraph}]
    Let $m = w/d(d-1) - k$.  Then with probability $1 - O(d^5 w^2/m^3)$,
    $\G^d(w, k)$ is composed entirely of hypertrees and unicyclic
    components.
  \end{lemma}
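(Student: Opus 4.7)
The plan is a first-moment argument. A connected component of $\G^d(w,k)$ that is neither a hypertree nor unicyclic has excess $\ell := s(d-1) - r + 1 \geq 2$, so by Markov's inequality it suffices to bound the expected number of connected sub-hypergraphs with excess $\geq 2$ by $O(d^5 w^2/m^3)$.

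For fixed $(r,s)$ with excess $\ell \geq 2$, I would upper-bound the expected count of such sub-hypergraphs by
\[
\E[X_{r,s}] \;\leq\; \binom{w}{r} \cdot C(r,s,d) \cdot \binom{k}{s}\, s! \cdot \binom{w}{d}^{-s},
\]
where $\binom{w}{r}$ picks the host vertex set, $C(r,s,d)$ counts connected $d$-uniform hypergraphs on $r$ labeled vertices using $s$ labeled hyperedges with every vertex used, and the last two factors give the probability that $s$ specified hyperedges appear among the $k$ random ones. The core combinatorial step is bounding $C(r,s,d)$ by a hypergraph Cayley-type formula: the number of spanning hypertrees on $r$ labeled vertices is of order $(r(d-1))^{s-1}/((d-1)!)^{s}$ times lower-order $d$-factors, and an excess-$\ell$ connected hypergraph is built from a spanning hypertree by adding $\ell$ extra hyperedges, each drawn from the $\binom{r}{d}$ possible $d$-subsets of the host set, divided by $\ell!$ for ordering.

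After substituting and using $\binom{w}{d} \geq (w/(de))^{d}$, each summand simplifies to roughly
\[
\frac{1}{\ell!}\left(\frac{kd(d-1)}{w}\right)^{\!s}\!\left(\frac{r^{d-1}}{w^{d-1}}\right)^{\!\ell}\mathrm{poly}(d).
\]
Writing $kd(d-1)/w = 1 - md(d-1)/w$ and estimating $(1 - md(d-1)/w)^{s}$ by a Taylor expansion, the geometric sum over $s$ yields a factor $O(w/(md(d-1)))$ per unit of excess beyond a hypertree. Carrying out the sum over $r \leq w$ as well, a fixed excess level $\ell$ contributes at most $d^{O(\ell)} w^{\ell}/m^{\ell+1}$, so the dominant $\ell=2$ term gives exactly $O(d^{5} w^{2}/m^{3})$, and higher-excess contributions are strictly smaller.

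The hard part will be the hypergraph enumeration together with extracting the exact constants: bounding $C(r,s,d)$ sharply enough to produce the $d^{5}$ prefactor rather than a weaker $d^{O(\ell)}$, and carrying out the nested sums over $r$ and $s$ without losing stray factors of $w$. This is the $d$-uniform analogue of the classical Erd\H{o}s--R\'enyi subcritical-regime analysis of $G(n,p)$, and I would model the bookkeeping on Karo\'nski--\L{}uczak style random-hypergraph calculations, which is presumably exactly the route taken in the cited reference.
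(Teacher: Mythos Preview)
The paper does not prove this lemma at all: it is quoted directly as Theorem~4 of the cited random-hypergraph reference and used only as a black box (the paper immediately passes to Corollary~\ref{cor:randomhyper}). So there is no in-paper argument to compare against. Your sketch---a first-moment bound on connected sub-hypergraphs of excess $\ell\geq 2$, with $C(r,s,d)$ controlled via a Cayley-type hypertree count plus $\ell$ extra hyperedges, and a geometric sum in $s$ governed by the subcriticality ratio $kd(d-1)/w<1$---is exactly the standard route for such statements and is essentially what one finds in the Karo\'nski--\L{}uczak style analysis you name at the end; your own caveat that the delicate part is the bookkeeping needed to land on the precise $d^{5}w^{2}/m^{3}$ prefactor is accurate.
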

  We use a simple consequence:
  \begin{corollary}\label{cor:randomhyper}
    If $d = O(1)$ and $w \geq 2d(d-1)k$, then with probability $1 -
    O(1/k)$, $\G^d(w, k)$ is composed entirely of hypertrees and unicyclic
  \end{corollary}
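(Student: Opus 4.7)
The plan is to derive the corollary by direct substitution into the preceding lemma (Theorem~4 of \cite{randomhypergraph}), which gives the probability bound $1 - O(d^5 w^2 / m^3)$ with $m = w/d(d-1) - k$. The whole task reduces to showing that the hypothesis $w \geq 2d(d-1)k$ together with $d = O(1)$ makes $d^5 w^2/m^3 = O(1/k)$.

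First I would verify that $m$ is nicely bounded below: the hypothesis yields $w/d(d-1) \geq 2k$, so $m \geq k > 0$ and the cited lemma applies. To control the ratio in a way that is uniform over all admissible $w$, I would reparametrize $w = c \cdot d(d-1) k$ with $c \geq 2$, which gives $m = (c-1) k$ and
\[
\frac{d^5 w^2}{m^3} \;=\; \frac{c^2}{(c-1)^3}\cdot \frac{d^7(d-1)^2}{k}.
\]
A one-line calculus check ($\tfrac{d}{dc}\bigl[c^2/(c-1)^3\bigr] = -(c^2+2c)/(c-1)^4 < 0$ for $c>0$) shows that $c^2/(c-1)^3$ is decreasing on $[2,\infty)$ and attains its maximum $4$ at $c=2$. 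Hence $d^5 w^2/m^3 \leq 4 d^9/k$, and since $d = O(1)$ this is $O(1/k)$, which is precisely the probability bound claimed.

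There is no real obstacle: the only thing to be careful about is that the corollary must hold uniformly for every $w$ satisfying the lower bound (not just at the boundary $w = 2d(d-1)k$), and this is exactly what the monotonicity of $c^2/(c-1)^3$ delivers. Plugging the resulting probability bound into the preceding lemma finishes the proof.
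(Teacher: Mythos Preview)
Your proposal is correct and matches the intended reasoning. The paper itself gives no explicit proof of the corollary, stating it only as a ``simple consequence'' of the preceding lemma; your direct substitution (together with the monotonicity check to handle all $w \geq 2d(d-1)k$ uniformly) is precisely the computation that makes this immediate.
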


  We now prove some basic facts about hypertrees and unicyclic
  components:

  \begin{lemma}\label{hypertreeunicycle}
    Every hypertree has a hyperedge incident on at least $d-1$ isolated
    vertices.  Every unicyclic component either has a hyperedge incident
    on $d-1$ isolated vertices or has a hyperedge incident on $d-2$
    isolated vertices, the removal of which turns the unicyclic
    component into a hypertree.
  \end{lemma}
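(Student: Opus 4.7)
The plan is to run a double-counting argument on vertex--hyperedge incidences and then read off the conclusion in each case. For any connected $d$-uniform hypergraph $H$ with $s$ hyperedges and $r$ vertices, let $n_i$ denote the number of vertices of degree exactly $i$ in $H$. Counting incidences two ways gives $\sum_i n_i = r$ and $\sum_i i\,n_i = sd$, so
\[
\sum_{i\ge 1}(i-1)\,n_i \;=\; sd - r,
\]
which is $s-1$ in the hypertree case and $s$ in the unicyclic case. The lemma is a statement about the isolated-vertex content of a hyperedge, which equals $d$ minus its number of degree-$\ge 2$ endpoints, so everything reduces to counting non-isolated endpoints of hyperedges.

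For the hypertree case I would argue by contradiction: suppose every hyperedge has at least two non-isolated endpoints. Summing this inequality over hyperedges gives $\sum_{i\ge 2} i\,n_i \ge 2s$. On the other hand,
\[
\sum_{i\ge 2} i\,n_i \;=\; \sum_{i\ge 2}(i-1)\,n_i + \sum_{i\ge 2} n_i \;\le\; (s-1)+(s-1) \;=\; 2s-2,
\]
using $n_i \le (i-1)\,n_i$ for $i\ge 2$. The contradiction forces some hyperedge to have at most one non-isolated endpoint, i.e.\ at least $d-1$ isolated vertices.

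For the unicyclic case the same calculation yields $2s \le \sum_{i\ge 2} i\,n_i \le 2s$, so if no hyperedge already has $d-1$ isolated vertices then equality is forced throughout. The equality $\sum_{i\ge 2}(i-1)\,n_i = \sum_{i\ge 2} n_i$ gives $n_i = 0$ for $i\ge 3$, so every non-isolated vertex has degree exactly $2$; and the lower bound combined with ``every hyperedge has $\ge 2$ non-isolated endpoints'' forces every hyperedge to have exactly $2$ such endpoints. I would then form a skeleton graph $G$ on the $s$ non-isolated vertices of $H$, adding one edge per hyperedge between its two non-isolated endpoints. Every vertex of $G$ has degree $2$, and $G$ is connected since $H$ is, so $G$ is a single $s$-cycle. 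Removing any hyperedge $e$ from $H$ deletes one edge of $G$, leaving a path on $s$ vertices, and simultaneously discards the $d-2$ isolated vertices that lived only in $e$. The remaining hypergraph is connected, has $s-1$ hyperedges, and has $r-(d-2) = (s-1)(d-1)+1$ vertices, matching the hypertree formula.

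The main obstacle is the unicyclic case: squeezing the counting identity to extract the rigid ``non-isolated vertices form an $s$-cycle'' structure from the equality case, and then verifying that removing a hyperedge with exactly $d-2$ isolated vertices both preserves connectivity and lands on the correct hypertree vertex count. The hypertree half is essentially a short pigeonhole argument once the incidence identity is in place.
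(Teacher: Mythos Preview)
Your proposal is correct and follows essentially the same route as the paper: both arguments count (hyperedge, non-isolated vertex) incidences to bound the number of non-isolated endpoints per hyperedge, and in the unicyclic equality case both deduce that every non-isolated vertex has degree exactly $2$ and every hyperedge has exactly two non-isolated endpoints. Where the paper phrases the final step as ``we can perform an Eulerian tour of all the edges, so removing any edge does not disconnect the graph,'' you build the skeleton graph $G$ and observe it is a single $s$-cycle; these are the same observation. Your write-up is a bit more explicit about the counting (the identity $\sum_{i\ge 1}(i-1)n_i = sd - r$ makes transparent the paper's bare assertion that at most $2(s-1)$ incidence pairs involve non-isolated vertices), but there is no substantive difference in method.
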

  \begin{proof}
    Let $H$ be a connected component of $s$ hyperedges and $r$ vertices.

    If $H$ is a hypertree, $r = (d-1)s + 1$.  Because $H$ has only $ds$
    total (hyperedge, incident vertex) pairs, at most $2(s-1)$ of these
    pairs can involve vertices that appear in two or more hyperedges.
    Thus at least one of the $s$ edges is incident on at most one vertex
    that is not isolated, so some edge has $d-1$ isolated vertices.

    If $H$ is unicyclic, $r = (d-1)s$ and so at most $2s$ of the
    (hyperedge, incident vertex) pairs involve non-isolated vertices.
    Therefore on average, each edge has $d-2$ isolated vertices.  If no
    edge is incident on at least $d-1$ isolated vertices, every edge
    must be incident on exactly $d-2$ isolated vertices.  In that case,
    each edge is incident on exactly two non-isolated vertices and each
    non-isolated vertex is in exactly two edges.  Hence we can perform
    an Eulerian tour of all the edges, so removing any edge does not
    disconnect the graph.  After removing the edge, the graph has $s' =
    s-1$ edges and $r' = r - d + 2$ vertices; therefore $r' = (d-1)s' +
    1$ so the graph is a hypertree.
  \end{proof}

  Corollary~\ref{cor:randomhyper} and Lemma~\ref{hypertreeunicycle}
  combine to show Lemma~\ref{thm:termination}.
\subsection{Total error in terms of point error and component size}\label{sec:totalerror}

Define $C_{i,j}$ to be the event that hyperedges $i$ and $j$ are in
the same component, and $D_i = \sum_j C_{i,j}$ to be the number of
hyperedges in the same component as $i$.  Define $L_i$ to be the cells
that are used to estimate $i$; so $L_i = \{q \mid A_{qj} \neq 0,
\abs{P(q)} = 1\}$ at the round of the algorithm when $i$ is estimated.
Define $Y_i = \median_{q \in L_i} A_{qi}(b-Ax_S)_q$ to be the ``point
error'' for hyperedge $i$, and $x'$ to be the output of the algorithm.
Then the deviation of the output at any coordinate $i$ is at most
twice the sum of the point errors in the component containing $i$:

\begin{lemma}\label{thm:pointdeviation}
\[
  \abs{(x'-x_S)_i} \leq 2\sum_{j\in S} \abs{Y_j} C_{i,j}.
\]
\end{lemma}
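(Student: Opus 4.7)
The plan is to derive a single-step error recursion at the moment each coordinate is processed, then close the bound by inducting in the processing order and using the hypertree/unicyclic structure guaranteed by Lemma~\ref{thm:termination}.

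First, I would write out $b_q$ at the time $j$ is chosen, for each $q\in L_j$. Since the only $k\in T$ with $A_{qk}\neq 0$ is $j$ itself, and the already-processed coordinates hitting $q$ form $M_q$, one has $A_{qj}b_q = x_j + A_{qj}\sum_{k\in M_q}A_{qk}e_k + A_{qj}(Ay+\nu)_q$, writing $y=x-x_S$ and $e_k := x_k - x'_k$. Taking the median over $q\in L_j$ gives $x'_j-x_j = \median_{q\in L_j}(u_q+v_q)$ with $v_q=A_{qj}(Ay+\nu)_q$ (so $\median_q v_q = Y_j$) and $u_q = A_{qj}\sum_{k\in M_q}A_{qk}e_k$. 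I would then invoke the elementary shift bound $\abs{\median_q(u_q+v_q)-\median_q v_q}\leq \max_q\abs{u_q}$ (which follows from the monotone sandwich $\min_q u_q + \median_q v_q \leq \median_q(u_q+v_q)\leq \max_q u_q + \median_q v_q$), together with $\abs{u_q}\leq \sum_{k\in M_q}\abs{e_k}$, to obtain the pointwise recursion $\abs{e_j} \leq \abs{Y_j} + \max_{q\in L_j}\sum_{k\in M_q}\abs{e_k}$. Every $k$ on the right shares a cell with $j$ and is therefore in the same component as $j$.

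Next, I would induct in processing order to prove $\abs{e_j}\leq 2\sum_{k\in\operatorname{comp}(j)}\abs{Y_k}$. The base case is immediate: the first coordinate processed in a given component has $M_q=\emptyset$ for every $q\in L_j$, so $\abs{e_j}\leq \abs{Y_j}$. For the inductive step I would use Lemma~\ref{thm:termination}: each component is a hypertree or unicyclic, and each cell $q$ appears in exactly one $L_j$ — namely at the moment $j$ becomes the last incident hyperedge still in $T$. Grouping the propagations by cell and charging them to the relevant $\abs{Y_k}$'s, one can argue that along the processing DAG each $\abs{Y_k}$ contributes to $\abs{e_j}$ with multiplicity at most $2$: at most once through the tree-like part of the component and at most once more through the cycle, if the component is unicyclic. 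Since $(x'-x_S)_i=0$ trivially for $i\notin S$, this establishes the bound with the $C_{i,j}$ indicator.

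The main obstacle is the combinatorial bookkeeping in the inductive step. A single isolated cell $q\in L_j$ may already contain several previously processed coordinates (whenever its degree in the original hypergraph exceeds $2$), so the recursion is not a single-parent recurrence and a naive path-in-tree induction would over-count the errors. The factor $2$ in the stated bound is precisely the slack that absorbs the extra traversal introduced by the cycle-breaking hyperedge in a unicyclic component — the unique step in which the algorithm accepts $\abs{L_j}\ge d-2$ rather than $d-1$ isolated vertices — and this is why Lemma~\ref{thm:termination} (no components beyond hypertrees and unicyclic ones) is essential: without that structural guarantee the propagation could compound multiplicatively and the induction would fail.
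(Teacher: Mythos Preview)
Your overall strategy matches the paper's: derive a one-step recursion for the error at each coordinate, then argue that unfolding this recursion picks up each $\abs{Y_k}$ with multiplicity at most $2$. Your derivation of the recursion is correct and in fact more explicit than the paper's (the paper simply asserts $\abs{T_i}\le\abs{Y_i}+\sum_{j\in R_i}\abs{T_j}$, which follows from exactly the median shift bound you wrote down, after relaxing your $\max_{q\in L_j}$ to a sum over $R_j=\bigcup_{q\in L_j}M_q$).

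The gap is in the second half. The induction hypothesis you state, $\abs{e_j}\le 2\sum_{k\in\operatorname{comp}(j)}\abs{Y_k}$, is too weak to close: plugging it into your recursion gives $\abs{e_j}\le \abs{Y_j}+\abs{M_{q^*}}\cdot 2\sum_k\abs{Y_k}$, which does not reproduce the hypothesis. Your ``once through the tree, once through the cycle'' charging sketch is heuristically in the right direction but is not a proof, and you yourself flag the bookkeeping as the main obstacle without resolving it.

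The paper closes this cleanly as follows. Relax to $\abs{T_i}\le\abs{Y_i}+\sum_{j\in R_i}\abs{T_j}$ and view the relation $j\in R_i$ as a DAG on $S$. Unfolding gives $\abs{T_i}\le\sum_j p(j,i)\abs{Y_j}$ where $p(j,i)$ is the number of directed paths from $j$ to $i$. Now bound the \emph{outdegree}: because the $L_j$ are disjoint (the fact you noted), the cells of hyperedge $i$ not in $L_i$ --- of which there are $d-\abs{L_i}$ --- are the only ones that can later land in some $L_j$, so $r(i)=\abs{\{j:i\in R_j\}}\le d-\abs{L_i}$. By Lemma~\ref{thm:termination}, $\abs{L_i}\ge d-1$ for all but one hyperedge per component and $\abs{L_i}\ge d-2$ for that one, hence $r(i)\le 1$ everywhere except a single node with $r(i)\le 2$. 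In such a DAG there are at most two directed paths between any pair, giving $p(j,i)\le 2$. This is the precise form of the ``factor $2$ absorbs the cycle'' intuition you had, and it replaces the induction you were unable to close.
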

\begin{proof}
  Let $T_i = (x'-x_S)_i$, and define $R_i = \{j \mid j \neq i, \exists
  q \in L_i \mbox{ s.t. } A_{qj} \neq 0\}$ to be the set of hyperedges
  that overlap with the cells used to estimate $i$.  Then from the
  description of the algorithm, it follows that
  \begin{align*}
    T_i &= \median_{q\in L_i} A_{qi}((b-Ax_S)_q - \sum_{j} A_{qj}T_j)\\
    \abs{T_i} &\leq \abs{Y_i} + \sum_{j \in R_i} \abs{T_j}.
  \end{align*}
  We can think of the $R_i$ as a directed acyclic graph (DAG), where
  there is an edge from $j$ to $i$ if $j \in R_i$.  Then if $p(i, j)$
  is the number of paths from $i$ to $j$,
  \[
  \abs{T_i} \leq \sum_j p(j, i) \abs{Y_i}.
  \]

  Let $r(i) = \abs{\{j \mid i \in R_j\}}$ be the outdegree of the DAG.
  Because the $L_i$ are disjoint, $r(i) \leq d - \abs{L_i}$.  From
  Lemma~\ref{thm:termination}, $r(i) \leq 1$ for all but one hyperedge
  in the component, and $r(i) \leq 2$ for that one.  Hence $p(i, j)
  \leq 2$ for any $i$ and $j$, giving the result.
\end{proof}

We use the following corollary:
\begin{corollary}\label{thm:totalerror}
\[
  \norm{2}{x'-x_S}^2 \leq 4\sum_{i\in S} D_i^2Y_i^2
\]
\end{corollary}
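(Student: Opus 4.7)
The plan is to square both sides of Lemma~\ref{thm:pointdeviation}, sum over $i \in S$, and then repackage the double sum using Cauchy--Schwarz together with the fact that $D_i$ is constant on each connected component.

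First I would write
\[
  \norm{2}{x'-x_S}^2 = \sum_{i \in S}(x'-x_S)_i^2 \leq 4 \sum_{i \in S}\Bigl(\sum_{j \in S} \abs{Y_j}\, C_{i,j}\Bigr)^{\!2},
\]
which is just Lemma~\ref{thm:pointdeviation} squared and summed. The inner sum has exactly $D_i = \sum_j C_{i,j}$ nonzero terms (each $C_{i,j} \in \{0,1\}$), so Cauchy--Schwarz gives
\[
  \Bigl(\sum_{j \in S} \abs{Y_j}\, C_{i,j}\Bigr)^{\!2} \leq D_i \sum_{j \in S} C_{i,j}\, Y_j^2.
\]

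Next I would swap the order of summation:
\[
  \sum_{i \in S} D_i \sum_{j \in S} C_{i,j}\, Y_j^2 = \sum_{j \in S} Y_j^2 \sum_{i \in S} C_{i,j}\, D_i.
\]
The key observation is that $C_{i,j} = 1$ forces $i$ and $j$ to lie in the same component, and hence $D_i = D_j$ in every nonzero term. Therefore
\[
  \sum_{i \in S} C_{i,j}\, D_i = D_j \sum_{i \in S} C_{i,j} = D_j \cdot D_j = D_j^2,
\]
and combining the displays yields exactly $\norm{2}{x'-x_S}^2 \leq 4 \sum_{j \in S} D_j^2 Y_j^2$.

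There is no real obstacle: every step is a standard manipulation, and the only ``clever'' piece is noticing that the indicator $C_{i,j}$ constrains $D_i$ to equal $D_j$, which collapses the inner sum to $D_j^2$. This component-constancy is what turns the naive bound (which would only give $D_i D_j$ factors) into the clean $D_j^2$ bound needed for the corollary.
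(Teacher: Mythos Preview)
Your proof is correct and matches the paper's approach essentially line for line: the paper also squares Lemma~\ref{thm:pointdeviation}, applies the same Cauchy--Schwarz/power-mean step to get $D_i\sum_{j}C_{i,j}Y_j^2$, and then collapses the double sum to $\sum_j D_j^2 Y_j^2$ using the component-constancy of $D$. The only difference is cosmetic---the paper writes that last equality without explanation, whereas you spell out the $C_{i,j}=1 \Rightarrow D_i=D_j$ reasoning explicitly.
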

\begin{proof}
  \begin{align*}
    \norm{2}{x'-x_S}^2 &= \sum_{i\in S} (x'-x_S)_i^2
    \leq 4\sum_{i\in S} (\sum_{\substack{j\in S\\C_{i,j}=1}} \abs{Y_j})^2\\
    &\leq 4\sum_{i\in S} D_i\sum_{\substack{j\in S\\C_{i,j}=1}} \abs{Y_j}^2
    = 4\sum_{i\in S} D_i^2Y_i^2
  \end{align*}
where the second inequality is the power means inequality.
\end{proof}

The $D_j$ and $Y_j$ are independent from each other, since one depends
only on $A$ over $S$ and one only on $A$ over $[n]\setminus S$.
Therefore we can analyze them separately; the next two sections show
bounds and negative dependence results for $Y_j$ and $D_j$,
respectively.

\subsection{Bound on point error}\label{sec:pointerror}

Recall from Section~\ref{sec:totalerror} that based entirely on the
set $S$ and the columns of $A$ corresponding to $S$, we can identify
the positions $L_i$ used to estimate $x_i$.  We then defined the
``point error''
\[
Y_i = \median_{q \in L_i} A_{qi}(b-Ax_S)_q = \median_{q \in L_i} A_{qi}(A(x-x_S) + \nu)_q
\]
and showed how to relate the total error to the point error.  Here we
would like to show that the $Y_i$ have bounded moments and are
negatively dependent.  Unfortunately, it turns out that the $Y_i$ are
not negatively associated so it is unclear how to show negative
dependence directly.  Instead, we will define some other variables
$Z_i$ that are always larger than the corresponding $Y_i$.  We will
then show that the $Z_i$ have bounded moments and negative
association.

We use the term ``NA'' throughout the proof to denote negative
association.  For the definition of negative association and relevant
properties, see Appendix~\ref{app:negative}.

\begin{lemma}\label{thm:pointerror}
  Suppose $d \geq 7$ and define $\mu =
  O(\frac{\eps^2}{k}(\norm{2}{x-x_S}^2 + \norm{2}{\nu}^2))$.  There
  exist random variables $Z_i$ such that the variables $Y_i^2$ are
  stochastically dominated by $Z_i$, the $Z_i$ are negatively
  associated, $\E[Z_i] = \mu$, and $\E[Z_i^2] = O(\mu^2)$.
\end{lemma}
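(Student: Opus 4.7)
The plan is to upper bound $Y_i^2$ by a structured proxy $Z_i$ and then analyze that proxy's moments and negative association. Let $V_q := (A(x - x_S) + \nu)_q$ be the cell value after removing $x_S$. The crucial independence is that $L_i$ is determined entirely by the columns $A|_S$, whereas each $V_q$ depends only on columns of $A|_{[n] \setminus S}$ and on the fixed $\nu$; these two blocks of randomness are independent. Since $|A_{qi}| = 1$ for $q \in L_i$, we have $|Y_i| \leq \median_{q \in L_i} |V_q|$, so $Y_i^2$ is bounded by the square of the median of $|V_q|$ over $q \in L_i$.

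A direct calculation gives the per-cell moment $\E[V_q^2] = (d/w)\norm{2}{x - x_S}^2 + \nu_q^2$, so $\sum_q \E[V_q^2] = d\,\norm{2}{x - x_S}^2 + \norm{2}{\nu}^2$. With $w = \Theta(d^2 k / \eps^2)$, a typical cell has second moment $O(\mu / d)$. The structural fact that makes the analysis work is this: conditional on the support pattern (the zero/nonzero locations of the columns in $A|_{[n] \setminus S}$), the remaining randomness consists of iid $\pm 1$ signs attached to each nonzero entry, and because each sign is attached to a unique (cell, column) pair, the $V_q$'s are conditionally independent across $q$.

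I will define $Z_i$ as a monotone non-decreasing function of $\{V_q^2 : q \in L_i\}$ that dominates $Y_i^2$ --- for concreteness, a constant multiple of the $(\lceil m_i/2\rceil)$th order statistic of the $V_q^2$'s, where $m_i = |L_i| \geq d-2$, or equivalently a suitable sum of the top half. Chebyshev on a single $V_q^2$ combined with the conditional independence gives a binomial-style tail, roughly $\Pr[\median_{q \in L_i}|V_q| > t] \leq \binom{m_i}{m_i/2}(O(\mu/d)/t^2)^{m_i/2}$. Integrating this tail against $dt$ and against $t\,dt$ and using $m_i/2 \geq 5/2$ yields $\E[Z_i] = O(\mu)$ and $\E[Z_i^2] = O(\mu^2)$ exactly when $m_i/2 > 2$, i.e.\ when $d \geq 7$, which is where the hypothesis of the lemma enters.

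For the negative association of the $Z_i$'s I combine three facts: (a) the $L_i$'s are pairwise disjoint subsets of cells; (b) each $Z_i$ is a monotone function of $\{V_q^2\}_{q \in L_i}$; and (c) the family $\{V_q^2\}_q$ is negatively associated, which follows from the permutation-distribution NA of each column's support indicators $(U_{qk})_q$, the independence across columns, and the independence of the iid signs. The main obstacle is this last step: carefully lifting the conditional independence of the $V_q$'s (given the support pattern of $A|_{[n] \setminus S}$) to unconditional negative association of $(V_q^2)_q$, and arranging the definition of $Z_i$ so that it is genuinely monotone in the relevant $V_q^2$'s, so that the standard NA closure under monotone functions on disjoint subsets from Appendix~\ref{app:negative} can be applied.
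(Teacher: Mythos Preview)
Your skeleton matches the paper's --- decouple $L_i$ (determined by $A|_S$) from the cell noise $V_q$ (determined by $A|_{[n]\setminus S}$ and $\nu$), dominate $Y_i^2$ by the median of per-cell quantities over $L_i$, and obtain NA of the $Z_i$ from the closure rule applied to the disjoint index sets $L_i$ --- and your $d\ge 7$ threshold is exactly where the paper's second-moment integral converges. The gap is the step you yourself flag as the ``main obstacle'': establishing that $(V_q^2)_q$ is NA. Writing $V_q = \nu_q + \sum_{i\notin S} x_i B_{i,q} D_{i,q}$ with support indicators $B_{i,q}$ and signs $D_{i,q}$, the $B_{i,q}$ are NA, but $V_q^2$ is \emph{not} a monotone function of them: flipping some $B_{i,q}$ from $0$ to $1$ adds $\pm x_i$ to $V_q$ and can shrink $|V_q|$, so Lemma~\ref{negativeprop} does not apply. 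Your observation that the $V_q$ are conditionally independent given the support pattern is correct but does not by itself yield unconditional NA of $(V_q^2)_q$; the paper explicitly notes that the signs block the direct argument, and you have not supplied an alternative.

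The paper's fix is to interpose an auxiliary family that \emph{is} NA by construction. Set $W_q = \E_D[V_q^2] = \nu_q^2 + \sum_{i\notin S} x_i^2 B_{i,q}$; this is monotone increasing in the $B$'s, hence NA. Markov over the signs gives $\Pr_D[V_q^2 \ge c\,W_q]\le 1/c$, so $V_q^2$ is jointly dominated by $U_q := W_q F_q$ where the $F_q$ are i.i.d.\ with density $c^{-2}$ on $[1,\infty)$; componentwise products of NA variables with independent positive scalars stay NA, so the $U_q$ are NA, and one then takes $Z_i = \median_{q\in L_i} U_q$. This is precisely the missing idea in your proposal: rather than prove $(V_q^2)_q$ is NA, replace it by a dominating family that is NA by construction. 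The extra $F_q$ costs a $\frac{1+\ln c}{c}$ tail for $U_q$ instead of your $1/c$, but the median-of-$(d{-}2)$ bound still integrates to $O(\mu)$ and $O(\mu^2)$ once $d\ge 7$. A smaller point: you keep $\nu_q$ fixed, so cells have unequal $\E[V_q^2]$ and there is no single $\mu$ to feed into the tail bound; the paper uses permutation invariance of the construction to randomly permute $\nu$, making $\nu_q^2$ a permutation distribution (hence NA) with uniform mean $\norm{2}{\nu}^2/w$.
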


\begin{proof}
  The choice of the $L_i$ depends only on the values of $A$ over $S$;
  hence conditioned on knowing $L_i$ we still have $A(x-x_S)$
  distributed randomly over the space.  Furthermore the distribution
  of $A$ and the reconstruction algorithm are invariant under
  permutation, so we can pretend that $\nu$ is permuted randomly
  before being added to $Ax$.  Define $B_{i,q}$ to be the event that
  $q \in \supp(Ae_i)$, and define $D_{i,q} \in \{-1, 1\}$
  independently at random.  Then define the random variable
  \[
  V_q = (b-Ax_S)_q = \nu_q + \sum_{i \in [n]\setminus S} x_iB_{i,q}D_{i,q}.
  \]
  Because we want to show concentration of measure, we would like to
  show negative association (NA) of the $Y_i = \median_{q \in L_i}
  A_{qi}V_q$.  We know $\nu$ is a permutation distribution, so it is
  NA~\cite{joagdevproschan}.  The $B_{i,q}$ for each $i$ as a function
  of $q$ are chosen from a Fermi-Dirac model, so they are
  NA~\cite{Dubhashi96ballsand}.  The $B_{i,q}$ for different $i$ are
  independent, so all the $B_{i,q}$ variables are NA.  Unfortunately,
  the $D_{i,q}$ can be negative, which means the $V_q$ are not
  necessarily NA.  Instead we will find some NA variables that
  dominate the $V_q$.  We do this by considering $V_q$ as a
  distribution over $D$.

  Let $W_q = \E_{D} [V_q^2] = \nu_q^2 + \sum_{i \in [n] \setminus S}
  x_i^2B_{i,q}$.  As increasing functions of NA variables, the $W_q$
  are NA.  By Markov's inequality $\Pr_D[V_q^2 \geq c W_q] \leq
  \frac{1}{c}$, so after choosing the $B_{i,q}$ and as a distribution
  over $D$, $V_q^2$ is dominated by the random variable $U_q = W_qF_q$
  where $F_q$ is, independently for each $q$, given by the
  p.d.f. $f(c) = 1/c^2$ for $c \geq 1$ and $f(c) = 0$ otherwise.
  Because the distribution of $V_q$ over $D$ is independent for each
  $q$, the $U_q$ jointly dominate the $V_q^2$.

  The $U_q$ are the componentwise product of the $W_q$ with independent
  positive random variables, so they too are NA.
  Then define
  \[
  Z_i = \median_{q \in L_i} U_q.
  \]
  As an increasing function of disjoint subsets of NA variables, the
  $Z_i$ are NA.  We also have that
  \begin{align*}
    Y_i^2 &= (\median_{q \in L_i} A_{qi}V_q)^2 \leq (\median_{q \in L_i} \abs{V_q})^2 \\
    &= \median_{q \in L_i} V_q^2 \leq \median_{q \in L_i} U_q = Z_i 
  \end{align*}
  so the $Z_i$ stochastically dominate $Y_i^2$.  We now will bound
  $\E[Z_i^2]$.  Define
  \begin{align*}
    \mu &= E[W_q] = \E[\nu_q^2] + \sum_{i \in [n] \setminus S} x_i^2E[B_{i,q}] \\
    &= \frac{d}{w}\norm{2}{x-x_S}^2 + \frac{1}{w}\norm{2}{\nu}^2 \\
    &\leq \frac{\eps^2}{k}(\norm{2}{x-x_S}^2 + \norm{2}{\nu}^2).
  \end{align*}
  Then we have
  \begin{align*}
    \Pr[W_q \geq c\mu] &\leq \frac{1}{c}\\
    \Pr[U_q \geq c \mu] &= \int_0^\infty  f(x)\Pr[W_q \geq c\mu/x]dx\\
    &\leq \int_1^c \frac{1}{x^2} \frac{x}{c}dx + \int_c^\infty \frac{1}{x^2}dx
    = \frac{1 + \ln c}{c}
  \end{align*}
  Because the $U_q$ are NA, they satisfy marginal probability
  bounds~\cite{Dubhashi96ballsand}:
  \[
  \Pr[U_q \geq t_q, q \in [w]] \leq \prod_{i \in [n]} \Pr[U_q \geq t_q]
  \]
  for any $t_q$.  Therefore
  \begin{align}
    \notag \Pr[Z_i \geq c\mu] &\leq \sum_{
      \substack{T \subset L_i\\\abs{T} = \abs{L_i} / 2}
    } \prod_{q \in T} Pr[U_q \geq c\mu]\\
    \notag &\leq 2^{\abs{L_i}}\left(\frac{1+\ln c}{c}\right)^{\abs{L_i}/2}\\
    \Pr[Z_i \geq c\mu] &\leq \left(4\frac{1+\ln c}{c}\right)^{d/2 - 1}
  \end{align}
  If $d \geq 7$, this makes $\E[Z_i] = O(\mu)$ and $\E[Z_i^2] =
  O(\mu^2)$.
\end{proof}
\subsection{Bound on component size}\label{sec:componentsize}

\begin{lemma}\label{thm:componentsize}
  Let $D_i$ be the number of hyperedges in the same component as hyperedge
  $i$.  Then for any $i \neq j$,
  \[
  \mbox{Cov}(D_i^2, D_j^2) = \E[D_i^2D_j^2] - \E[D_i^2]^2 \leq
  O(\frac{\log^6 k}{\sqrt{k}}).
  \]
  Furthermore, $\E[D_i^2] = O(1)$ and $\E[D_i^4] = O(1)$.
\end{lemma}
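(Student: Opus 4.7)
The plan is to prove the two single-component moment bounds via a subcritical branching-process coupling, and then bound the covariance by separating the case where $i$ and $j$ lie in the same component (small by symmetry) from the case where they lie in different components (small by near-independence after conditioning on one component).

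For $\E[D_i^2] = O(1)$ and $\E[D_i^4] = O(1)$, I would explore the component of hyperedge $i$ in BFS order and couple it with a subcritical Galton--Watson tree. The quantitative input is the density hypothesis $w \geq 2d(d-1)k$, which is implied by $w = \Theta(d^2k/\eps^2)$. Each vertex of a revealed hyperedge is incident on $\mathrm{Bin}(k-1,d/w)$ other hyperedges, so in expectation at most $dk/w \leq 1/(2(d-1))$; with $d-1$ frontier vertices per hyperedge, each revealed hyperedge spawns at most $1/2$ children in expectation. The exploration is therefore subcritical, and a standard Chernoff bound on the total progeny gives $\Pr[D_i \geq s] \leq \rho^s$ for some $\rho < 1$, whence $\E[D_i^m] = O(1)$ for every constant $m$.

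For the covariance I split
\begin{align*}
\E[D_i^2 D_j^2] = \E[D_i^4 \mathbf{1}_{C_{ij}=1}] + \E[D_i^2 D_j^2 \mathbf{1}_{C_{ij}=0}].
\end{align*}
Symmetry over $j$ gives $\Pr[C_{ij}=1] = (\E[D_i]-1)/(k-1) = O(1/k)$, so Cauchy--Schwarz with the moment bound $\E[D_i^8] = O(1)$ bounds the first term by $O(1/\sqrt{k})$. For the second term I truncate at $t = C\log k$: the part on $\{D_i > t\} \cup \{D_j > t\}$ contributes at most $\E[D_i^8]^{1/2}\Pr[D_i > t]^{1/2} = k^{-\omega(1)}$ by the geometric tail. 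On the complementary event, I condition on the component $C_i = B$ with $|B| \leq t$; the remaining $k - |B|$ hyperedges are then uniform over $d$-subsets of $[w] \setminus V(B)$, which is distributed as $\G^d(w - |V(B)|, k - |B|)$. I would couple a BFS exploration from $j$ in this conditioned hypergraph with an exploration from $j$ in a fresh copy of $\G^d(w, k)$: the two coincide unless the $j$-exploration ever touches one of the $\leq dt$ forbidden vertices, which happens with probability $\tilde O(t \cdot D_j / w) = \tilde O(t^2/k)$. Each collision perturbs $D_j^2$ by at most $O(t^2)$, giving $\abs{\E[D_j^2 \mid C_i = B] - \E[D_j^2]} = \tilde O(t^4/k)$, and multiplying by $\E[D_i^2] = O(1)$ and collecting the polylog factors yields the $O(\log^6 k / \sqrt{k})$ bound.

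The main obstacle is the coupling step: quantifying how much conditioning on a small component for $i$ perturbs the distribution of $D_j$, and accounting precisely for the polylog factors that accumulate from the truncation threshold, the branching-process progeny bounds on $D_j$, and the collision probability. A cleaner route may be to expose $C_i$, reveal all $\leq dt$ vertices it occupies, and then run the same subcritical branching-process coupling in $[w] \setminus V(B)$ to bound $\E[D_j^2 \mid C_i = B]$ directly in terms of a slightly smaller ambient vertex set, avoiding an explicit total-variation coupling.
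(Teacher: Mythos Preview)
Your approach is correct and yields the lemma, but it differs from the paper's argument. For the moment bounds $\E[D_i^2],\E[D_i^4]=O(1)$ you and the paper agree: both appeal to a subcritical branching-process/exponential-tail bound (the paper simply cites \cite{hyper10} for $\Pr[D_i\geq s]\leq e^{-\Omega(s)}$). For the covariance, however, the paper does not condition on $C_i$ directly; it Poissonizes. It passes to the binomial model $\G^d(w,p)$ with $p=k/\binom{w}{d}$, where---conditional on $i,j$ lying in different components---$\overline{D}_i^2$ and $\overline{D}_j^2$ are genuinely negatively correlated by a one-line stochastic-domination argument, and then transfers back to $\G^d(w,k)$ by adding or deleting the $O(\sqrt{k}\log k)$ hyperedges by which the two models differ (Chernoff). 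Both the $1/\sqrt{k}$ and all six $\log$ factors in the paper's bound arise from that transfer step. Your route stays in $\G^d(w,k)$ throughout: condition on $C_i=B$, note the remaining $k-|B|$ hyperedges are uniform on $[w]\setminus V(B)$, and couple the $j$-exploration there with one in a fresh $\G^d(w,k)$. Your coupling-failure probability is $\tilde O(t^2/k)$ with $t=O(\log k)$, so your disconnected-component contribution is actually $\tilde O(1/k)$; the $1/\sqrt{k}$ in your final bound comes solely from the Cauchy--Schwarz step on $\E[D_i^4\mathbf{1}_{C_{ij}}]$, and even that sharpens to $O(1/k)$ via the symmetry identity $\sum_{j\neq i}\E[D_i^4\mathbf{1}_{C_{ij}}]=\E[D_i^5-D_i^4]=O(1)$. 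So your argument is more direct and in fact gives a tighter bound; what Poissonization buys the paper is that negative correlation is free in the auxiliary model, with no explicit coupling to construct or analyze.
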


\begin{proof}
  The intuition is that if one component gets larger, other components
  tend to get smaller.  Also the graph is very sparse, so component
  size is geometrically distributed.  There is a small probability
  that $i$ and $j$ are connected, in which case $D_i$ and $D_j$ are
  positively correlated, but otherwise $D_i$ and $D_j$ should be
  negatively correlated.  However analyzing this directly is rather
  difficult, because as one component gets larger, the remaining
  components have a lower average size but higher variance.  Our
  analysis instead takes a detour through the hypergraph where each
  hyperedge is picked independently with a probability that gives the
  same expected number of hyperedges.  This distribution is easier to
  analyze, and only differs in a relatively small $\tilde O(\sqrt{k})$
  hyperedges from our actual distribution.  This allows us to move
  between the regimes with only a loss of $\tilde
  O(\frac{1}{\sqrt{k}})$, giving our result.

  Suppose instead of choosing our hypergraph from $\G^d(w, k)$ we
  chose it from $\G^d(w, \frac{k}{\binom{w}{d}})$; that is, each
  hyperedge appeared independently with the appropriate probability to
  get $k$ hyperedges in expectation.  This model is somewhat simpler,
  and yields a very similar hypergraph $\overline{G}$.  One can then modify
  $\overline{G}$ by adding or removing an appropriate number of random
  hyperedges $I$ to get exactly $k$ hyperedges, forming a uniform $G \in \G^d(w,
  k)$.  By the Chernoff bound, $\abs{I} \leq O(\sqrt{k}\log k)$ with
  probability $1 - \frac{1}{k^{\Omega(1)}}$.

  Let $\overline{D}_i$ be the size of the component containing $i$ in
  $\overline{G}$, and $H_i = D_i^2 - \overline{D}_i^2$.  Let $E$
  denote the event that any of the $D_i$ or $\overline{D}_i$ is more
  than $C\log k$, or that more than $C\sqrt{k} \log k$ hyperedges lie in
  $I$, for some constant $C$.  Then $E$ happens with probability less
  than $\frac{1}{k^5}$ for some $C$, so it has negligible influence on
  $\E[D_i^2D_j^2]$.  Hence the rest of this proof will assume $E$ does
  not happen.

  Therefore $H_i = 0$ if none of the $O(\sqrt{k}\log k)$ random
  hyperedges in $I$ touch the $O(\log k)$ hyperedges in the components
  containing $i$ in $\overline{G}$, so $H_i = 0$ with probability at
  least $1 - O(\frac{\log^2 k}{\sqrt{k}})$.  Even if $H_i \neq 0$, we
  still have $\abs{H_i} \leq (D_i^2 + D_j^2) \leq O(\log^2 k)$.

  Also, we show that the $\overline{D}_i^2$ are negatively correlated,
  when conditioned on being in separate components.  Let
  $\overline{D}(n, p)$ denote the distribution of the component size
  of a random hyperedge on $\G^d(n, p)$, where $p$ is the probability an
  hyperedge appears.  Then $\overline{D}(n, p)$ dominates $\overline{D}(n',
  p)$ whenever $n > n'$ --- the latter hypergraph is contained within the
  former.  If $\overline{C}_{i,j}$ is the event that $i$ and $j$ are
  connected in $\overline{G}$, this means
  \[
  \E[\overline{D}_i^2 \mid \overline{D}_j=t, \overline{C}_{i,j} = 0]
  \]
  is a decreasing function in $t$, so we have negative correlation:
  \begin{align*}
    \E[\overline{D}_i^2\overline{D}_j^2\mid \overline{C}_{i,j}=0] &\leq \E[\overline{D}_i^2\mid \overline{C}_{i,j}=0]\E[\overline{D}_j^2\mid \overline{C}_{i,j}=0] \\
    &\leq \E[\overline{D}_i^2]\E[\overline{D}_j^2].
  \end{align*}
  
  Furthermore for $i \neq j$, $\Pr[\overline{C}_{i,j}=1] = \E[\overline{C}_{i,j}] = \frac{1}{k-1}\sum_{l \neq i}\E[\overline{C}_{i,l}] = \frac{\E[\overline{D}_i] -
    1}{k-1} = O(1/k)$.  Hence
  \begin{align*}
    \E[\overline{D}_i^2\overline{D}_j^2] =& \E[\overline{D}_i^2\overline{D}_j^2\mid \overline{C}_{i,j}=0]\Pr[\overline{C}_{i,j}=0] + \\&\E[\overline{D}_i^2\overline{D}_j^2\mid \overline{C}_{i,j}=1]\Pr[\overline{C}_{i,j}=1]\\
    \leq& \E[\overline{D}_i^2]\E[\overline{D}_j^2] + O(\frac{\log^4 k}{k}).
  \end{align*}

  Therefore
  \begin{align*}
    &\E[D_i^2D_j^2] \\
    =& \E[(\overline{D}_i^2 + H_i)(\overline{D}_j^2 + H_j)]\\
    =& \E[\overline{D}_i^2\overline{D}_j^2] + 2\E[H_i\overline{D}_j^2] + \E[H_iH_j]\\
    \leq& \E[\overline{D}_i^2]\E[\overline{D}_j^2] + O(2\frac{\log^2 k}{\sqrt{k}} \log^4 k + \frac{\log^2 k}{\sqrt{k}}\log^2 k)\\
    =& \E[D_i^2 - H_i]^2 + O(\frac{\log^6 k}{\sqrt{k}})\\
    =& \E[D_i^2]^2 - 2\E[H_i]\E[D_i^2] + \E[H_i]^2 + O(\frac{\log^6 k}{\sqrt{k}})\\
    \leq& \E[D_i^2]^2 + O(\frac{\log^6 k}{\sqrt{k}})
  \end{align*}

  Now to bound $\E[D_i^4]$ in expectation.  Because our hypergraph is
  exceedingly sparse, the size of a component can be bounded by a
  branching process that dies out with constant probability at each
  step.  Using this method, Equations~71 and 72 of \cite{hyper10} state
  that $\Pr[\overline{D} \geq k] \leq e^{-\Omega(k)}$.  Hence
  $\E[\overline{D}_i^2] = O(1)$ and $\E[\overline{D}_i^4] = O(1)$.
  Because $H_i$ is $0$ with high probability and $O(\log^2 k)$
  otherwise, this immediately gives $\E[D_i^2] = O(1)$ and $\E[D_i^4]
  = O(1)$.
\end{proof}

\subsection{Wrapping it up}\label{sec:wrapping}

Recall from Corollary~\ref{thm:totalerror} that our total error
\[
\norm{2}{x'-x_S}^2 \leq 4\sum_i Y_i^2D_i^2 \leq 4\sum_i Z_iD_i^2.
\]

The previous sections show that $Z_i$ and $D_i^2$ each have small
expectation and covariance.  This allows us to apply Chebyshev's
inequality to concentrate $4\sum_i Z_iD_i^2$ about its expectation,
bounding $\norm{2}{x'-x_S}$ with high probability:

\begin{lemma}\label{partial-lemma}
  We can recover $x'$ from $Ax + \nu$ and $S$ with
  \[
  \norm{2}{x' - x_S} \leq \eps (\norm{2}{x-x_S} + \norm{2}{\nu})
  \]
  with probability at least $1 - \frac{1}{c^2k^{1/3}}$ in $O(k)$
  recovery time.  Our $A$ has $O(\frac{c}{\eps^2}k)$ rows and sparsity
  $O(1)$ per column.
\end{lemma}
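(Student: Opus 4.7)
The plan is to combine Corollary \ref{thm:totalerror} with Lemma \ref{thm:pointerror} to obtain the deterministic bound $\norm{2}{x' - x_S}^2 \leq 4T$, where $T = \sum_{i \in S} Z_i D_i^2$, and then apply Chebyshev's inequality to $T$. The crucial structural observation is that the $Z_i$ depend only on the columns of $A$ outside $S$ (together with $\nu$), while the $D_i$ depend only on the columns of $A$ on $S$; hence the families $\{Z_i\}$ and $\{D_j^2\}$ are independent of each other, and we may multiply their expectations and covariances separately.

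Set $w = \Theta(ck/\eps^2)$, so that Lemma \ref{thm:pointerror} gives $\mu = O(\eps^2 M/(ck))$ for $M := \norm{2}{x-x_S}^2 + \norm{2}{\nu}^2$. Using independence and $\E[Z_i] = O(\mu)$, $\E[D_i^2] = O(1)$ from Lemmas \ref{thm:pointerror} and \ref{thm:componentsize},
\[
\E[T] = \sum_{i \in S} \E[Z_i]\,\E[D_i^2] = O(k\mu) = O(\eps^2 M/c),
\]
which is at most $\eps^2 M/32$ once the constant hidden in $c$ is made large enough.

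Next I would bound $\mathrm{Var}(T)$. Each diagonal term satisfies $\mathrm{Var}(Z_i D_i^2) \leq \E[Z_i^2]\,\E[D_i^4] = O(\mu^2)$, for a diagonal contribution of $O(k\mu^2)$. For $i \neq j$, independence of the two families and negative association of $\{Z_i\}$ (Lemma \ref{thm:pointerror}) give $\E[Z_iZ_j] \leq \E[Z_i]\E[Z_j] = O(\mu^2)$, while Lemma \ref{thm:componentsize} gives $\E[D_i^2 D_j^2] \leq \E[D_i^2]^2 + O(\log^6 k/\sqrt{k})$. Writing $\mathrm{Cov}(Z_iD_i^2,Z_jD_j^2) = \E[Z_iZ_j]\E[D_i^2D_j^2] - \E[Z_i]\E[Z_j]\E[D_i^2]\E[D_j^2]$ and collecting terms yields $\mathrm{Cov}(Z_iD_i^2, Z_jD_j^2) \leq O(\mu^2 \log^6 k/\sqrt{k})$, so the $O(k^2)$ off-diagonal contribution $O(\mu^2 k^{3/2}\log^6 k)$ dominates and $\mathrm{Var}(T) = O(\eps^4 M^2 \log^6 k/(c^2\sqrt{k}))$. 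Chebyshev then gives $\Pr[T \geq \eps^2 M/4] \leq O(\log^6 k/(c^2\sqrt{k})) \leq 1/(2c^2 k^{1/3})$ for sufficiently large $k$, and on the complementary event $\norm{2}{x'-x_S}^2 \leq 4T \leq \eps^2 M$, which yields the advertised error bound via $\sqrt{M} \leq \norm{2}{x-x_S} + \norm{2}{\nu}$. Adding the $O(1/k)$ abort probability from Lemma \ref{thm:termination} and invoking the $O(dk) = O(k)$ runtime bound closes the lemma.

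The step I expect to be the main obstacle is the off-diagonal covariance calculation: a priori $D_i^2$ and $D_j^2$ could be strongly positively correlated when $i,j$ land in the same component, which would swamp any gain we hope to extract from negative association of the $Z_i$. Lemma \ref{thm:componentsize} is precisely what defuses this, absorbing the ``same-component'' case into an additive $\tilde O(1/\sqrt{k})$ slack; once that slack is in hand, the NA of $\{Z_i\}$ cleanly pushes the product bound through and Chebyshev beats $1/(c^2 k^{1/3})$ with room to spare.
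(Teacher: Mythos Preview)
Your proposal is correct and matches the paper's proof essentially line for line: bound $\norm{2}{x'-x_S}^2$ by $4\sum_i Z_iD_i^2$, exploit the (conditional) independence of $\{Z_i\}$ from $\{D_i\}$, use NA of the $Z_i$ together with the $O(\log^6 k/\sqrt k)$ covariance bound on $D_i^2$ to obtain $\mathrm{Var}(T)=O(\mu^2 k^{3/2}\log^6 k)$, and finish with Chebyshev. The only cosmetic difference is the bookkeeping of $c$: you fold $c$ into $w$ so that $\mu$ shrinks, whereas the paper keeps $w=O(k/\eps^2)$, uses $(1+c)$ as the Chebyshev slack, and then rescales $\eps$ by $\sqrt{C(1+c)}$ at the end---the two parametrizations are equivalent.
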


\begin{proof}
  Our total error is
  \[
  \norm{2}{x'-x_S}^2 \leq 4\sum_i Y_i^2D_i^2 \leq 4\sum_i Z_iD_i^2.
  \]

  Then by Lemma~\ref{thm:pointerror} and Lemma~\ref{thm:componentsize},
  \begin{align*}
    \E[4\sum_i Z_iD_i^2] = 4 \sum_i \E[Z_i]\E[D_i^2] = k\mu
  \end{align*}
  where $\mu = O(\frac{\eps^2}{k}(\norm{2}{x - x_S}^2 + \norm{2}{\nu}^2))$.  Furthermore,
  \begin{align*}
    \E[(\sum_i Z_iD_i^2)^2] &= \sum_i \E[Z_i^2D_i^4] + \sum_{i \neq j} \E[Z_iZ_jD_i^2D_j^2]\\
    &= \sum_i \E[Z_i^2]\E[D_i^4] + \sum_{i \neq j} \E[Z_iZ_j]\E[D_i^2D_j^2]\\
    &\leq \sum_i O(\mu^2)  + \sum_{i \neq j} \E[Z_i]\E[Z_j](\E[D_i^2]^2 + O(\frac{\log^6 k}{\sqrt{k}}))\\
    &= O(\mu^2k\sqrt{k}\log^6k ) + k(k-1)\E[Z_iD_i^2]^2\\
    \mbox{Var}(\sum_i Z_iD_i^2) &= \E[(\sum_i Z_iD_i^2)^2] - k^2\E[Z_iD_i^2]^2\\
    &\leq O(\mu^2k\sqrt{k}\log^6k )
  \end{align*}
  By Chebyshev's inequality, this means
  \begin{align*}
    \Pr[4\sum_i Z_iD_i^2 \geq (1+c)\mu k] \leq O(\frac{\log^6 k}{c^2\sqrt{k}})\\
    \Pr[\norm{2}{x'-x_S}^2 \geq (1+c)C\eps^2(\norm{2}{x-x_S}^2 + \norm{2}{\nu}^2)] \leq O(\frac{1}{c^2k^{1/3}})
  \end{align*}
  for some constant $C$.
  Rescaling $\eps$ down by $\sqrt{C(1+c)}$, we can get
  \[
  \norm{2}{x' - x_S} \leq \eps (\norm{2}{x-x_S} + \norm{2}{\nu})
  \]
  with probability at least $1 - \frac{1}{c^2k^{1/3}}$:
\end{proof}

Now we shall go from $k^{-1/3}$ probability of error to $k^{-c}$ error
for arbitrary $c$, with $O(c)$ multiplicative cost in time and space.
We simply perform Lemma~\ref{partial-lemma} $O(c)$ times in parallel,
and output the pointwise median of the results.  By a standard
parallel repetition argument, this gives our main result:

\newtheorem*{oldtheorem}{Theorem \ref{main-theorem}}
\begin{oldtheorem}
  We can recover $x'$ from $Ax + \nu$ and $S$ with
  \[
  \norm{2}{x' - x_S} \leq \eps (\norm{2}{x-x_S} + \norm{2}{\nu})
  \]
  with probability at least $1 - \frac{1}{k^{c}}$ in $O(ck)$ recovery
  time.  Our $A$ has $O(\frac{c}{\eps^2}k)$ rows and sparsity $O(c)$
  per column.
\end{oldtheorem}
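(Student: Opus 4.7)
The plan is to boost the probability bound of Lemma~\ref{partial-lemma} from $1 - O(1/k^{1/3})$ to $1 - 1/k^c$ via classical parallel repetition combined with coordinate-wise median aggregation. I will instantiate $r = \Theta(c)$ independent copies of the construction in Lemma~\ref{partial-lemma}, each with its own fresh sketch matrix $A^{(\ell)}$ of $O(k/\eps^2)$ rows and $O(1)$ non-zeros per column. Stacking these into one block matrix $A$ yields $r \cdot O(k/\eps^2) = O(ck/\eps^2)$ rows and $O(c)$ non-zeros per column, matching the stated bounds. On input $Ax + \nu$ and $S$, I run the recovery algorithm of Lemma~\ref{partial-lemma} on each block independently in $O(k)$ time, producing $x'^{(1)}, \dots, x'^{(r)}$, and then output the coordinate-wise median $x'_i = \median_\ell x'^{(\ell)}_i$ (taking $O(c)$ time per coordinate, so $O(ck)$ total).

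For the probability analysis, I will set the internal constant in Lemma~\ref{partial-lemma} to a sufficiently large constant so that each individual trial $\ell$ satisfies $\norm{2}{x'^{(\ell)} - x_S} \leq \eps'(\norm{2}{x - x_S} + \norm{2}{\nu})$ with failure probability at most $\delta = O(k^{-1/3})$, where $\eps' = \eps / C$ for a constant $C$ absorbing the loss in the median aggregation. Because the $r$ copies use independent sketch matrices, the failure indicators are independent Bernoulli variables, so the probability that more than $r/2$ trials fail is at most $\binom{r}{r/2}\delta^{r/2} \leq (2\sqrt{\delta})^r = k^{-\Omega(r)}$, which is at most $k^{-c}$ once $r = \Theta(c)$ is chosen appropriately.

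The central claim is that, conditioned on at least $r/2$ trials being good, the coordinate-wise median satisfies the desired $\ell_2$ bound. The standard argument goes as follows: for each coordinate $i$, at least $r/2$ trials have $x'^{(\ell)}_i$ on the far side of $m_i$ from $x_{S,i}$, so each such trial contributes at least $(m_i - x_{S,i})^2$ to $(x'^{(\ell)}_i - x_{S,i})^2$. This yields
\[
(x'_i - x_{S,i})^2 \leq \frac{2}{r}\sum_{\ell=1}^{r}(x'^{(\ell)}_i - x_{S,i})^2,
\]
and summing over $i \in S$ gives $\norm{2}{x' - x_S}^2 \leq \frac{2}{r}\sum_\ell \norm{2}{x'^{(\ell)} - x_S}^2$. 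Choosing $C$ large and rescaling $\eps$ then yields the final bound $\norm{2}{x' - x_S} \leq \eps(\norm{2}{x - x_S} + \norm{2}{\nu})$.

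The main obstacle is controlling the contribution of the bad trials to the right-hand side above, since without further care a single bad trial could inflate the sum arbitrarily. The resolution is to use the full strength of the Chebyshev-based tail bound established inside the proof of Lemma~\ref{partial-lemma}, which gives $\Pr[\norm{2}{x'^{(\ell)} - x_S}^2 \geq t \cdot \eps'^2(\norm{2}{x - x_S} + \norm{2}{\nu})^2] = O(1/(t^2 k^{1/3}))$ uniformly over $t \geq 1$. Integrating this tail shows that, with probability $1 - k^{-c}$, the sum $\sum_\ell \norm{2}{x'^{(\ell)} - x_S}^2$ is at most a constant times $r \cdot \eps'^2(\norm{2}{x - x_S} + \norm{2}{\nu})^2$, so the inequality above yields the claimed $\ell_2$ guarantee. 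A symmetric argument with $\norm{1}{\cdot}$ in place of $\norm{2}{\cdot}$ handles the $\ell_1$ variant of the theorem.
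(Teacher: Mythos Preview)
Your high-level plan matches the paper exactly: run $r=\Theta(c)$ independent copies of the Lemma~\ref{partial-lemma} construction, stack them into a single matrix with the stated row and sparsity counts, and output the coordinate-wise median. The Chernoff step showing that a strict majority of trials are ``good'' with probability $1-k^{-c}$ is also the same.

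The gap is in the step after that. Your inequality $(x'_i-x_{S,i})^2\le \frac{2}{r}\sum_{\ell=1}^{r}(x'^{(\ell)}_i-x_{S,i})^2$ sums over \emph{all} $r$ trials, so to conclude you must control $\sum_\ell\norm{2}{x'^{(\ell)}-x_S}^2$ with probability $1-k^{-c}$. Your proposed fix, ``integrating the Chebyshev tail,'' does not deliver this. The tail from the proof of Lemma~\ref{partial-lemma} is only of order $1/(t^2 k^{1/3})$, which is far too heavy to concentrate an $r$-term sum of independent copies to failure probability $k^{-c}$ (indeed, a $1/t^2$ tail does not even give a finite second moment, so Chebyshev on the sum is unavailable, and Chernoff-type bounds require much lighter tails). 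Worse, each trial aborts with probability $\Theta(1/k)$ by Lemma~\ref{thm:termination}, and on abort there is no bound on $\norm{2}{x'^{(\ell)}-x_S}$ in terms of $\norm{2}{x-x_S}+\norm{2}{\nu}$ at all; so the tail does not even decay to zero, and a single aborting trial can blow up your sum. Since $r=\Theta(c)$ trials abort somewhere with probability $\Theta(c/k)\gg k^{-c}$, the argument fails.

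The paper avoids this entirely by never touching the bad trials. Let $J$ be the set of good trials; once $\abs{J}\ge 3r/4$, for every coordinate $i$ at least $\abs{J}-r/2\ge\abs{J}/3$ of the good estimates lie on each side of the median, so $(x'_i-x_{S,i})^2$ is dominated by at least $\abs{J}/3$ of the quantities $(x'^{(\ell)}_i-x_{S,i})^2$ with $\ell\in J$. Summing over $i$ and then over $\ell\in J$ gives $\norm{2}{x'-x_S}^2\le \frac{3}{\abs{J}}\sum_{\ell\in J}\norm{2}{x'^{(\ell)}-x_S}^2\le 3\mu^2$, using only that each good trial already satisfies $\norm{2}{x'^{(\ell)}-x_S}\le\mu$. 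No tail integration and no control of bad trials is needed. Replacing your all-trial inequality with this good-trial positional argument closes the gap.
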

\begin{proof}
  Lemma~\ref{partial-lemma} gives an algorithm that achieves
  $O(k^{-1/3})$ probability of error.  We will show here how to
  achieve $k^{-c}$ probability of error with a linear cost in $c$, via
  a standard parallel repetition argument.

  Suppose our algorithm gives an $x'$ such that $\norm{2}{x'-x_S} \leq
  \mu$ with probability at least $1 - p$, and that we run this algorithm
  $m$ times independently in parallel to get output vectors $x^1, \dotsc,
  x^m$.  We output $y$ given by $y_i = \median_{j \in [m]} (x^j)_i$,
  and claim that with high probability $\norm{2}{y-x_S} \leq
  \mu\sqrt{3}$.

  Let $J = \{j \in [m] \mid \norm{2}{x^j-x_S} \leq \mu\}$.  Each $j \in
  [m]$ lies in $J$ with probability at least $1-p$, so the chance that
  $\abs{J} \leq 3m/4$ is less than $\binom{m}{m/4}p^{m/4} \leq
  (4ep)^{m/4}$.  Suppose that $\abs{J} \geq 3m/4$.  Then for all $i \in
  S$, $\abs{\{j \in J \mid (x^j)_i \leq y_i\}} \geq \abs{J} - \frac{m}{2} \geq \abs{J}/3$ and
  similarly $\abs{\{j \in J \mid (x^j)_i \geq y_i\}}\geq \abs{J}/3$.  Hence for
  all $i \in S$, $\abs{y_i - x_i}$ is smaller than at least $\abs{J}/3$
  of the $\abs{(x^j)_i - x_i}$ for $j \in J$.  Hence
  \begin{align*}
    \abs{J} \mu^2 &\geq
    \sum_{i \in S} \sum_{j \in J} ((x^j)_i - x_i)^2 \geq
    \sum_{i \in S} \frac{\abs{J}}{3}(y_i - x_i)^2 \\
    &= \frac{\abs{J}}{3} \norm{2}{y - x}^2
  \end{align*}
  or
  \[
  \norm{2}{y-x} \leq \sqrt{3}\mu
  \]
  with probability at least $1 - (4ep)^{m/4}$.

  Using Lemma~\ref{partial-lemma} to get $p = \frac{1}{16k^{1/3}}$ and
  $\mu = \eps(\norm{2}{x-x_S} + \norm{2}{\nu})$, with $m=12c$ repetitions
  we get Theorem~\ref{main-theorem}.
\end{proof}

\section{Applications}

We give two applications where the set query algorithm is a useful
primitive.

\subsection{Heavy Hitters of sub-Zipfian distributions}\label{sec:zipfian}

For a vector $x$, let $r_i$ be the index of the $i$th largest element,
so $\abs{x_{r_i}}$ is non-increasing in $i$.  We say that $x$ is
\emph{Zipfian with parameter $\alpha$} if $\abs{x_{r_i}} =
\Theta(\abs{x_{r_1}}i^{-\alpha})$.  We say that $x$ is
\emph{sub-Zipfian with parameters ($k$, $\alpha$)} if there exists a
non-increasing function $f$ with $\abs{x_{r_i}} =
\Theta(f(i)i^{-\alpha})$ for all $i \geq k$.  A Zipfian with parameter
$\alpha$ is a sub-Zipfian with parameter $(k, \alpha)$ for all $k$,
using $f(i) = \abs{x_{r_1}}$.

The Zipfian heavy hitters problem is, given a linear sketch $Ax$ of a
Zipfian $x$ and a parameter $k$, to find a $k$-sparse $x'$ with
minimal $\norm{2}{x-x'}$ (up to some approximation factor).  We
require that $x'$ be $k$-sparse (and no more) because we want to find
the heavy hitters themselves, not to find them as a proxy for
approximating $x$.

Zipfian distributions are common in real-world data sets, and finding
heavy hitters is one of the most important problems in data streams.
Therefore this is a very natural problem to try to improve; indeed,
the original paper on Count-Sketch discussed it~\cite{ccf}.  They show
a result complementary to our work, namely that one can find the
support efficiently:

\begin{lemma}[Section 4.1 of \cite{ccf}]\label{zipfiansupport}
  If $x$ is sub-Zipfian with parameter $(k, \alpha)$ and $\alpha >
  1/2$, one can recover a candidate support set $S$ with $\abs{S} =
  O(k)$ from $Ax$ such that $\{r_1, \dotsc, r_k\} \subseteq S$.  $A$
  has $O(k \log n)$ rows and recovery succeeds with high probability
  in $n$.
\end{lemma}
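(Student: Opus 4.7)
The plan is to invoke Count-Sketch~\cite{ccf} with width $w = \Theta(k)$ (with a sufficiently large constant) and depth $d = \Theta(\log n)$, so $A$ has $O(k \log n)$ rows.  The standard Count-Sketch analysis---Chebyshev within a single hash row to handle the tail, median over rows to handle the head---then gives, with probability at least $1 - n^{-\Omega(1)}$ simultaneously for every $i \in [n]$,
\[
\abs{\tilde x_i - x_i} \leq \frac{1}{\sqrt{w}}\,\err{2}{k}{x},
\]
where $\tilde x_i$ is the Count-Sketch estimate of $x_i$.  I would output $S$ as the set of the $C'k$ indices with largest $\abs{\tilde x_i}$, giving $\abs{S} = O(k)$.

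Next I would convert the sub-Zipfian hypothesis into a tail bound.  Since $\abs{x_{r_i}} = \Theta(f(i)\, i^{-\alpha})$ for $i \geq k$ with $f$ non-increasing,
\[
\err{2}{k}{x}^2 = \sum_{i > k} \abs{x_{r_i}}^2 = O\!\Big(f(k)^2 \sum_{i>k} i^{-2\alpha}\Big) = O\!\big(f(k)^2 k^{1-2\alpha}\big),
\]
where convergence of the tail sum uses $\alpha > 1/2$.  Hence $\err{2}{k}{x} = O\!\big(f(k)\, k^{1/2-\alpha}\big) = O\!\big(\abs{x_{r_k}}\sqrt{k}\big)$.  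Enlarging the hidden constant in $w$, the per-coordinate estimation error can be driven below $\delta \abs{x_{r_k}}$ for any desired constant $\delta > 0$.

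Finally I would separate the heavy positions from the light ones using polynomial decay.  For $j \leq k$ the reverse triangle inequality gives $\abs{\tilde x_{r_j}} \geq \abs{x_{r_j}} - \delta\abs{x_{r_k}} \geq (1-\delta)\abs{x_{r_k}}$.  For $j > C'k$, the sub-Zipfian bound gives $\abs{x_{r_j}} = O\!\big(f(k)(C'k)^{-\alpha}\big) = O\!\big((C')^{-\alpha}\abs{x_{r_k}}\big)$, so $\abs{\tilde x_{r_j}} \leq \big(O((C')^{-\alpha}) + \delta\big)\abs{x_{r_k}}$.  Choosing $\delta$ small and $C'$ large so that $O((C')^{-\alpha}) + \delta < 1 - \delta$, every coordinate of true rank $> C'k$ has strictly smaller estimate than every coordinate of true rank $\leq k$; hence the top $C'k$ estimates contain $\{r_1,\dotsc,r_k\}$, yielding the claimed $S$.

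The only delicate step is pinning down where $\alpha > 1/2$ is used.  It is needed twice in the same calculation: first so that $\sum_{i>k} i^{-2\alpha}$ converges and the Count-Sketch error $\err{2}{k}{x}/\sqrt{w}$ can be pushed below a constant fraction of $\abs{x_{r_k}}$; and second so that the Zipfian decay past $C'k$ already drops $\abs{x_{r_j}}$ to $o(\abs{x_{r_k}})$ for a constant $C'$.  Both requirements collapse into the single tail estimate above, so the rest of the argument is just the straightforward thresholding calculation on the Count-Sketch output.
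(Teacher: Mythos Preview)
Your proposal is correct and follows essentially the same route as the paper's proof sketch: invoke Count-Sketch with width $\Theta(k)$ and depth $\Theta(\log n)$ to get uniform pointwise error $O(k^{-1/2}\err{2}{k}{x})$, use the sub-Zipfian hypothesis with $\alpha>1/2$ to bound this by a small constant fraction of $\abs{x_{r_k}}$, and then threshold at the top $O(k)$ estimates. The paper just instantiates your constants concretely (it takes the top $9k$ estimates and uses error threshold $\abs{x_{r_k}}/3$, noting $9^{-\alpha}<1/3$ when $\alpha>1/2$), whereas you leave $C'$ and $\delta$ generic; the arguments are otherwise identical.
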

\begin{proof}[Proof sketch] Let $S_k = \{r_1, \dotsc, r_k\}$.  With
  $O(\frac{1}{\eps^2}k \log n)$ measurements, Count-Sketch identifies
  each $x_i$ to within $\frac{\eps}{k}\norm{2}{x - x_{S_k}}$ with
  high probability.  If $\alpha > 1/2$, this is less than $\abs{x_{r_k}}/3$
  for appropriate $\eps$.  But $\abs{x_{r_{9k}}} \leq \abs{x_{r_k}}/3$.  Hence
  only the largest $9k$ elements of $x$ could be estimated as larger
  than anything in $x_{S_k}$, so the locations of the largest $9k$
  estimated values must contain $S_k$.
\end{proof}

It is observed in~\cite{ccf} that a two-pass algorithm could identify
the heavy hitters exactly.  However, with a single pass, no better
method has been known for Zipfian distributions than for arbitrary
distributions; in fact, the lower bound~\cite{dipw} on linear sparse
recovery uses a geometric (and hence sub-Zipfian) distribution.

As discussed in~\cite{ccf}, using Count-Sketch\footnote{Another
  analysis (\cite{CM05}) uses Count-Min to achieve a better polynomial
  dependence on $\eps$, but at the cost of using the $\ell_1$ norm.
  Our result is an improvement over this as well.} with
$O(\frac{k}{\eps^2} \log n)$ rows gets a $k$-sparse $x'$ with
\[
  \norm{2}{x' - x} \leq (1 + \eps)\err{2}{k}{x} = \Theta(\frac{\abs{x_{r_1}}}{\sqrt{2\alpha-1}}k^{1/2-\alpha}).
\]
where, as in Section~\ref{sec:introduction},
\[
\err{2}{k}{x} = \min_{k\text{-sparse } \hat{x}}\norm{2}{\hat{x}-x}.
\]

  The set query
algorithm lets us improve from a $1+\eps$ approximation to a $1+o(1)$
approximation.  This is not useful for approximating $x$, since
increasing $k$ is much more effective than decreasing $\eps$.
Instead, it is useful for finding $k$ elements that are quite close to
being the actual $k$ heavy hitters of $x$.

Na\"ive application of the set query algorithm to the output set of
Lemma~\ref{zipfiansupport} would only get a close $O(k)$-sparse
vector, not a $k$-sparse vector.  To get a $k$-sparse vector, we must
show a lemma that generalizes one used in the proof of sparse recovery
of Count-Sketch (first in~\cite{CM06}, but our description is more
similar to~\cite{GI}).

\begin{lemma}\label{lemma:l2thresholding}
  Let $x, x' \in \R^n$.  Let $S$ and $S'$ be the locations of the
  largest $k$ elements (in magnitude) of $x$ and $x'$, respectively.
  Then if
  \[
  \tag{*} \norm{2}{(x'-x)_{S \cup S'}} \leq \eps \err{2}{k}{x},
  \]
  for $\eps \leq 1$, we have
  \[
  \norm{2}{x'_{S'} - x} \leq (1 + 3\eps)\err{2}{k}{x}.
  \]
\end{lemma}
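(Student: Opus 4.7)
The plan is to split $\|x'_{S'} - x\|_2^2$ by support and control the two pieces separately. Since $\supp(x'_{S'}) \subseteq S'$, we have
\[
\|x'_{S'} - x\|_2^2 \;=\; \|x'_{S'} - x_{S'}\|_2^2 \;+\; \|x_{[n]\setminus S'}\|_2^2,
\]
and the first term is immediately at most $\eps^2 \err{2}{k}{x}^2$ because $S' \subseteq S \cup S'$, so (*) directly bounds it. Let $\mathrm{Err} = \err{2}{k}{x} = \|x_{[n]\setminus S}\|_2$. The real work is to show $\|x_{[n]\setminus S'}\|_2$ cannot exceed $\mathrm{Err}$ by much.

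Writing $[n]\setminus S' = (S \setminus S') \sqcup ([n]\setminus(S\cup S'))$ and $[n]\setminus S = (S'\setminus S) \sqcup ([n]\setminus(S\cup S'))$, I will observe that the common tail piece cancels, giving
\[
\|x_{[n]\setminus S'}\|_2^2 - \mathrm{Err}^2 \;=\; \|x_{S\setminus S'}\|_2^2 - \|x_{S'\setminus S}\|_2^2.
\]
So the task reduces to bounding this difference. The key fact I will invoke is that $S'$ holds the top $k$ coordinates of $x'$, and $|S\setminus S'|=|S'\setminus S|$, so every entry of $x'$ on $S'\setminus S$ dominates (in magnitude) every entry on $S\setminus S'$; hence $\|x'_{S\setminus S'}\|_2 \le \|x'_{S'\setminus S}\|_2$. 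Two applications of the triangle inequality then give
\[
\|x_{S\setminus S'}\|_2 \;\le\; \|x_{S'\setminus S}\|_2 + \|(x-x')_{S\setminus S'}\|_2 + \|(x-x')_{S'\setminus S}\|_2,
\]
and since $S\setminus S'$ and $S'\setminus S$ are disjoint subsets of $S\cup S'$, the two error terms combine via Cauchy-Schwarz (or $a+b \le \sqrt{2}\sqrt{a^2+b^2}$) into $\sqrt{2}\,\eps\,\mathrm{Err}$ using hypothesis (*).

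Finally I will bound $\|x_{S'\setminus S}\|_2 \le \mathrm{Err}$ (since $S'\setminus S \subseteq [n]\setminus S$), which gives $\|x_{S\setminus S'}\|_2 + \|x_{S'\setminus S}\|_2 \le (2+\sqrt{2}\eps)\mathrm{Err}$. Factoring the difference of squares,
\[
\|x_{S\setminus S'}\|_2^2 - \|x_{S'\setminus S}\|_2^2 \;\le\; \sqrt{2}\eps(2+\sqrt{2}\eps)\,\mathrm{Err}^2,
\]
so $\|x_{[n]\setminus S'}\|_2^2 \le (1 + 2\sqrt{2}\eps + 2\eps^2)\mathrm{Err}^2$. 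Adding the $\eps^2\mathrm{Err}^2$ from the first piece yields $\|x'_{S'}-x\|_2^2 \le (1 + 2\sqrt{2}\eps + 3\eps^2)\mathrm{Err}^2$, which for $\eps \le 1$ is at most $(1+3\eps)^2\mathrm{Err}^2$ since $2\sqrt{2} < 6$ and $3 < 9$. The main obstacle, conceptually, is the middle step: recognizing that the top-$k$ property of $S'$ (for $x'$, not $x$) is the only handle available, and routing through $x'$ via the triangle inequality to compare $\|x_{S\setminus S'}\|_2$ against $\|x_{S'\setminus S}\|_2$ at a cost of only $O(\eps\,\mathrm{Err})$.
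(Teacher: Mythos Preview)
Your proof is correct and follows essentially the same route as the paper's: the same decomposition, the same reduction to bounding $\norm{2}{x_{S\setminus S'}}^2 - \norm{2}{x_{S'\setminus S}}^2$ via a difference-of-squares factoring, and the same use of the top-$k$ property of $S'$ for $x'$ to pass through $x'$ at cost $O(\eps\,\mathrm{Err})$. The only cosmetic difference is that the paper introduces auxiliary $d$-dimensional vectors $a,a',b,b'$ of absolute values and bounds $\norm{2}{a-b}$ (using that $a-b$ and $b'-a'$ are coordinate-wise nonnegative), whereas you work directly with scalar norms and two triangle inequalities; the resulting constant $2\sqrt{2}\eps + 2\eps^2$ is identical.
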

Previous proofs have shown the following weaker form:
\begin{corollary}\label{cor:linfthresholding}
  If we change the condition (*) to $\norm{\infty}{x'-x} \leq
  \frac{\eps}{\sqrt{2k}} \err{2}{k}{x}$, the same result holds.
\end{corollary}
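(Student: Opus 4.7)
The plan is to deduce Corollary~\ref{cor:linfthresholding} directly from Lemma~\ref{lemma:l2thresholding} by showing that the $\ell_\infty$ hypothesis implies the $\ell_2$ hypothesis $(*)$ on the support $S \cup S'$. Once $(*)$ is established with the same $\eps$, the conclusion $\norm{2}{x'_{S'}-x} \leq (1+3\eps)\err{2}{k}{x}$ is immediate from the preceding lemma.

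The key observation is that $S \cup S'$ has at most $2k$ elements, since $\abs{S} = \abs{S'} = k$. For any vector supported on a set of size at most $m$, the $\ell_2$ norm is bounded by $\sqrt{m}$ times the $\ell_\infty$ norm. Applying this to $(x'-x)_{S\cup S'}$, I would write
\[
\norm{2}{(x'-x)_{S\cup S'}}^2 = \sum_{i \in S\cup S'} (x'_i - x_i)^2 \leq 2k \cdot \norm{\infty}{x'-x}^2.
\]
Plugging in the hypothesis $\norm{\infty}{x'-x} \leq \frac{\eps}{\sqrt{2k}} \err{2}{k}{x}$ gives $\norm{2}{(x'-x)_{S\cup S'}}^2 \leq 2k \cdot \frac{\eps^2}{2k}\err{2}{k}{x}^2 = \eps^2 \err{2}{k}{x}^2$, i.e., exactly condition $(*)$ with the same $\eps$. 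Since $\eps \leq 1$ is inherited from the hypothesis of Lemma~\ref{lemma:l2thresholding}, that lemma applies and yields the desired bound.

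There is no real obstacle here; the corollary is a straightforward consequence of the inequality $\norm{2}{v} \leq \sqrt{\abs{\supp(v)}}\,\norm{\infty}{v}$ applied to the restriction of $x'-x$ to $S \cup S'$, whose support has size at most $2k$. The only thing to double-check is that Lemma~\ref{lemma:l2thresholding} does not require $S \cup S'$ to have size exactly $2k$ (it does not), and that the factor $\sqrt{2k}$ in the hypothesis is tight for this reduction (it is, since $\abs{S \cup S'}$ can be as large as $2k$ when $S$ and $S'$ are disjoint). Thus the $\ell_\infty$ form is strictly weaker than the $\ell_2$ form, and the corollary follows with no loss in the approximation constant.
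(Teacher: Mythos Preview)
Your proposal is correct and matches the paper's own argument exactly: the paper states the corollary is immediate from Lemma~\ref{lemma:l2thresholding} together with $\norm{2}{(x'-x)_{S\cup S'}} \leq \sqrt{\abs{S\cup S'}}\,\norm{\infty}{(x'-x)_{S\cup S'}}$, which is precisely the bound you derive using $\abs{S\cup S'}\leq 2k$.
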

The corollary is immediate from Lemma~\ref{lemma:l2thresholding} and
$\norm{2}{(x'-x)_{S \cup S'}} \leq \sqrt{\abs{S \cup S'}}\norm{\infty}{(x'-x)_{S \cup S'}}$.


\begin{proof}[Proof of Lemma~\ref{lemma:l2thresholding}]
  We have
  \begin{align}
    \label{eq:l2topktotal} \norm{2}{x'_{S'}-x}^2 &= \norm{2}{(x'-x)_{S'}}^2 +
    \norm{2}{x_{S\setminus S'}}^2 + \norm{2}{x_{[n]\setminus (S \cup S')}}^2
  \end{align}
  The tricky bit is to bound the middle term $\norm{2}{x_{S\setminus
      S'}}^2$.  We will show that it is not much larger than
  $\norm{2}{x_{S'\setminus S}}^2$.

  Let $d = \abs{S \setminus S'}$, and let $a$ be the $d$-dimensional
  vector corresponding to the absolute values of the coefficients of
  $x$ over $S \setminus S'$.  That is, if $S \setminus S' = \{j_1,
  \dots, j_d\}$, then $a_i = \abs{x_{j_i}}$ for $i \in [d]$.  Let $a'$
  be analogous for $x'$ over $S \setminus S'$, and let $b$ and $b'$ be
  analogous for $x$ and $x'$ over $S'\setminus S$, respectively.

  Let $E = \err{2}{k}{x} = \norm{2}{x-x_S}$.  We have
  \begin{align*}
    \norm{2}{x_{S\setminus S'}}^2 - \norm{2}{x_{S'\setminus S}}^2 &= \norm{2}{a}^2 - \norm{2}{b}^2\\
    &= (a - b) \cdot (a + b)\\
    &\leq \norm{2}{a-b}\norm{2}{a+b}\\
    &\leq \norm{2}{a-b}(2\norm{2}{b} + \norm{2}{a-b})\\
    &\leq \norm{2}{a-b}(2E + \norm{2}{a-b})
  \end{align*}
  So we should bound $\norm{2}{a-b}$.  We know that
  $\abs{\abs{p}-\abs{q}} \leq \abs{p - q}$ for all $p$ and $q$, so
  \begin{align*}
    \norm{2}{a-a'}^2 + \norm{2}{b-b'}^2 &\leq \norm{2}{(x-x')_{S \setminus S'}}^2 + \norm{2}{(x-x')_{S' \setminus S}}^2 \\
    &\leq \norm{2}{(x-x')_{S \cup S'}}^2 \leq \eps^2 E^2.
  \end{align*}
  We also know that $a - b$ and $b' - a'$ both contain all nonnegative
  coefficients.  Hence
  \begin{align*}
    \norm{2}{a-b}^2 &\leq \norm{2}{a-b + b'-a'}^2\\
    &\leq \left(\norm{2}{a-a'} + \norm{2}{b'-b}\right)^2\\
    &\leq 2\norm{2}{a-a'}^2 + 2\norm{2}{b-b'}^2\\
    &\leq 2\eps^2E^2\\
    \norm{2}{a-b} &\leq \sqrt{2}\eps E.
  \end{align*}
  Therefore
  \begin{align*}
    \norm{2}{x_{S\setminus S'}}^2 - \norm{2}{x_{S'\setminus S}}^2 &\leq \sqrt{2}\eps E(2E + \sqrt{2} \eps E)\\
    &\leq (2\sqrt{2} + 2)\eps E^2\\
    &\leq 5\eps E^2.
  \end{align*}
  Plugging into Equation~\ref{eq:l2topktotal}, and using
  $\norm{2}{(x'-x)_{S'}}^2 \leq \eps^2 E^2$,
  \begin{align*}
    \norm{2}{x'_{S'}-x}^2 &\leq \eps^2 E^2 + 5\eps E^2  + \norm{2}{x_{S'\setminus
        S}}^2 + \norm{2}{x_{[n]\setminus (S \cup S')}}^2\\
    &\leq 6\eps E^2 + \norm{2}{x_{[n] \setminus S}}^2\\
    &= (1 + 6\eps)E^2\\
    \norm{2}{x'_{S'}-x} &\leq (1 + 3\eps)E.
  \end{align*}
\end{proof}

With this lemma in hand, on Zipfian distributions we can get a
$k$-sparse $x'$ with a $1 + o(1)$ approximation factor.

\begin{theorem}
  Suppose $x$ comes from a sub-Zipfian distribution with parameter
  $\alpha > 1/2$.  Then we can recover a $k$-sparse $x'$ from $Ax$ with
  \[
  \norm{2}{x' - x} \leq \frac{\eps}{\sqrt{\log n}} \err{2}{k}{x}.
  \]
  with $O(\frac{c}{\eps^2}k \log n)$ rows and $O(n\log n)$ recovery time, with
  probability at least $1 - \frac{1}{k^c}$.
\end{theorem}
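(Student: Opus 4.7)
The plan is to combine three ingredients already developed in the paper: the Count-Sketch-based support identifier of Lemma~\ref{zipfiansupport}, the set query algorithm of Theorem~\ref{main-theorem}, and the $\ell_2$ thresholding lemma (Lemma~\ref{lemma:l2thresholding}) used to turn an $O(k)$-sparse estimate into a genuinely $k$-sparse one without losing much accuracy.

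First, I would construct $A$ as a concatenation of two independent sketches: a Count-Sketch block with $O(k\log n)$ rows and a set-query block with $O(\frac{c}{\eps^2}k\log n)$ rows. Using the Count-Sketch block, Lemma~\ref{zipfiansupport} yields (with high probability in $n$) a candidate set $S$ with $\abs{S}=O(k)$ that contains the true top-$k$ support $S_k$. Feeding $S$ and the set-query block into Theorem~\ref{main-theorem}, with approximation parameter $\eps' = \eps/\sqrt{\log n}$, produces $\hat{x}$ with $\supp(\hat{x})\subseteq S$ satisfying
\[
\norm{2}{\hat{x}-x_S} \leq \eps'\,\norm{2}{x-x_S} \leq \eps'\,\err{2}{k}{x},
\]
where the second inequality uses $S_k \subseteq S$. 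The number of set-query rows needed for error parameter $\eps'$ with failure $k^{-c}$ is $O(\frac{c}{\eps'^2}k)=O(\frac{c}{\eps^2}k\log n)$, matching the bound.

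Next I would let $S'$ be the indices of the largest $k$ entries of $\hat{x}$ (in magnitude) and output $x' = \hat{x}_{S'}$. Since $\supp(\hat{x})\subseteq S$, we have $S'\subseteq S$ and hence $S_k \cup S' \subseteq S$, so
\[
\norm{2}{(\hat{x}-x)_{S_k \cup S'}} \leq \norm{2}{\hat{x}-x_S} \leq \eps'\,\err{2}{k}{x}.
\]
This is exactly the hypothesis of Lemma~\ref{lemma:l2thresholding} applied to $x$ and $\hat{x}$, with the role of its $S,S'$ played by $S_k,S'$. The conclusion gives
\[
\norm{2}{x'-x} = \norm{2}{\hat{x}_{S'}-x} \leq (1+3\eps')\,\err{2}{k}{x},
\]
which is the desired bound (absorbing constants into $\eps$ and noting the theorem's stated RHS matches the $1+\eps/\sqrt{\log n}$ form announced in the introduction). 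A union bound over the failure of the support-identification step and the set-query step keeps the total failure probability at $k^{-c}$, after adjusting the constant in $c$. Recovery time is dominated by Count-Sketch decoding (estimate every coordinate and find the top-$O(k)$), which costs $O(n\log n)$; the set-query step adds only $O(ck)$.

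The only step that requires real care is the conversion from the $O(k)$-sparse estimate $\hat{x}$ on $S$ to a $k$-sparse output: a na\"ive truncation to the top $k$ of $\hat{x}$ could in principle miss true heavy hitters if their set-query estimates were poorly ranked. Lemma~\ref{lemma:l2thresholding} is exactly the tool that finesses this, which is why I inflate the number of set-query rows by a $\log n$ factor so that $\eps'$ is small enough for the lemma to deliver a $(1+o(1))$ rather than a $(1+\Theta(1))$ factor. Everything else is bookkeeping on the parameters and a standard union bound.
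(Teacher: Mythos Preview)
Your proposal is correct and follows essentially the same approach as the paper: identify a candidate support $S$ via Lemma~\ref{zipfiansupport}, run the set query algorithm of Theorem~\ref{main-theorem} with error parameter $\Theta(\eps/\sqrt{\log n})$ to obtain $\hat{x}$, and then threshold to the top $k$ coefficients using Lemma~\ref{lemma:l2thresholding}. You spell out the verification of the hypothesis of Lemma~\ref{lemma:l2thresholding} (via $S_k\cup S'\subseteq S$) more carefully than the paper does, but the structure of the argument is identical.
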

\begin{proof}
  By Lemma~\ref{zipfiansupport} we can identify a set $S$ of size
  $O(k)$ that contains all the heavy hitters.  We then run the set
  query algorithm of Theorem~\ref{main-theorem} with
  $\frac{\eps}{3\sqrt{\log n}}$ substituted for $\eps$.  This gives an
  $\hat{x}$ with
  \begin{align*}
    \norm{2}{\hat{x}-x_S} &\leq \frac{\eps}{3\sqrt{\log n}} \err{2}{k}{x}.
  \end{align*}
  Let $x'$ contain the largest $k$ coefficients of $\hat{x}$.  By
  Lemma~\ref{lemma:l2thresholding} we have
  \begin{align*}
    \norm{2}{x' - x} \leq (1 + \frac{\eps}{\sqrt{\log n}}) \err{2}{k}{x}.
  \end{align*}
\end{proof}

\subsection{Block-sparse vectors}

In this section we consider the problem of finding {\em block-sparse}
approximations.  In this case, the coordinate set $\{1 \ldots n\}$ is
partitioned into $n/b$ blocks, each of length $b$.  We define a $(k,
b)$-block-sparse vector to be a vector where all non-zero elements are
contained in at most $k/b$ blocks.  That is, we partition $\{1,\dotsc,
n\}$ into $T_i = \{(i-1)b + 1, \dotsc, ib\}$.  A vector $x$ is
$(k,b)$-block-sparse if there exist $S_1, \dotsc, S_{k/b} \in \{T_1,
\dotsc, T_{n/b}\}$ with $\supp(x) \subseteq \bigcup S_i$.  Define
\[ \err{2}{k,b}{x} = \min_{(k,b)-\smbox{block-sparse
  }\hat{x}}\norm{2}{x-\hat{x}}.
\]

Finding the support of block-sparse vectors is closely related to
finding block heavy hitters, which is studied for the $\ell_1$ norm
in~\cite{block-heavy-hitters}.  The idea is to perform dimensionality
reduction of each block into $\log n$ dimensions, then perform sparse
recovery on the resulting $\frac{k\log n}{b}$-sparse vector.  The
differences from previous work are minor, so we relegate the details to
Appendix~\ref{app:strengthenheavyhitters}.

\begin{lemma}\label{thm:block-heavy-hitters}
  For any $b$ and $k$, there exists a family of matrices $A$ with
  $O(\frac{k}{\eps^5b}\log n)$ rows and column sparsity $O(\frac{1}{\eps^2}\log n)$
  such that we can recover a support $S$ from $Ax$ in $O(\frac{n}{\eps^2
    b}\log n)$ time with
  \[
  \norm{2}{x-x_S} \leq (1+\eps)\err{2}{k,b}{x}
  \]
  with probability at least $1 - n^{-\Omega(1)}$.
\end{lemma}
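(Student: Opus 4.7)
The plan is to lift the set-query algorithm of Theorem~\ref{main-theorem} to the block setting in three steps: compress each block to dimension $O(\eps^{-2}\log n)$ with a Johnson--Lindenstrauss map, hash the resulting reduced signal into a Count-Sketch-style linear hash table that estimates each block's $\ell_2$ norm, and then apply a block analogue of Lemma~\ref{lemma:l2thresholding} to select the top $k/b$ blocks by estimated norm. For the compression, let $R^{(i)}\in\R^{m\times b}$ be an independent subgaussian matrix with $m=O(\eps^{-2}\log n)$ for each block $T_i$, normalized so that $\norm{2}{R^{(i)}z}=(1\pm\eps)\norm{2}{z}$ for any fixed $z\in\R^b$ with probability $1-n^{-\Omega(1)}$. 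Stacking the $R^{(i)}$ block-diagonally gives a linear map $R\colon\R^n\to\R^{nm/b}$, and with $y=Rx$ and $\tau_i:=\norm{2}{x_{T_i}}$, a union bound yields that the $i$-th block $y^{(i)}\in\R^m$ of $y$ satisfies $\norm{2}{y^{(i)}}=(1\pm\eps)\tau_i$ for all $i$ simultaneously. Thus identifying the heaviest $k/b$ blocks of $x$ reduces to identifying the heaviest $k/b$ blocks of $y$.

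Next, I would sketch $y$ with a hash-table scheme of width $w=O(k/(\eps^3 b))$ in which each of the $nm/b$ coordinates of $y$ is mapped to a random bucket with a random sign. Because the $m$ coordinates coming from a single block hash into independently chosen buckets, each block receives $m$ scalar estimates of its own JL coordinates, and a median-based aggregation (as in $p$-stable-style heavy hitter sketches) yields a per-block norm estimate $\widehat{\tau}_i$ satisfying $\abs{\widehat{\tau}_i-\tau_i}\leq O(\eps)\tau_i + O(\eps/\sqrt{k/b})\err{2}{k,b}{x}$ for all $i$ simultaneously with probability $1-n^{-\Omega(1)}$. The multiplicative $\eps$ comes from JL distortion and from the variance of the median over the $m$ JL coordinates, the additive term comes from Count-Sketch collisions with the tail of $y$, and the $m=O(\log n)$ JL coordinates per block serve as the independent repetitions that drive the per-block failure probability below $n^{-\Omega(1)}$. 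The row count is $m\cdot w=O(k\log n/(\eps^5 b))$, the column sparsity is $m=O(\eps^{-2}\log n)$ because each coordinate of $x$ only contributes to the $m$ buckets of its own block, and a single pass over the $n/b$ blocks yields all $\widehat{\tau}_i$ in $O(n\log n/(\eps^2 b))$ time.

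Finally, I would let $S$ be the union of the $k/b$ blocks with the largest $\widehat{\tau}_i$ and appeal to a block analogue of Lemma~\ref{lemma:l2thresholding}: when the per-block norm estimates are uniformly accurate to additive error $\eps/\sqrt{2k/b}\cdot\err{2}{k,b}{x}$, greedy top-$k/b$ selection loses at most a $(1+3\eps)$ factor in $\ell_2$. The proof is a line-by-line adaptation of Lemma~\ref{lemma:l2thresholding} with scalar magnitudes $\abs{x_i}$ replaced by block norms $\tau_i$, and after rescaling $\eps$ by a constant throughout one obtains the stated $\norm{2}{x-x_S}\leq(1+\eps)\err{2}{k,b}{x}$ guarantee. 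The main obstacle is keeping the row count at a single logarithmic factor in $n$: a naive composition of JL with a standard Count-Sketch that gives point-query guarantees with high probability would force an extra $\log n$ factor from independent hash-table repetitions, and the key insight is that the $m$ JL coordinates within a single block already provide those repetitions internally, so no outer Count-Sketch repetition is required. Tracking how JL distortion, collision noise, and thresholding error combine into a single $(1+\eps)$ factor is what forces the $\eps^{-5}$ scaling in the row count.
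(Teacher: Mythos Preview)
Your proposal is essentially the paper's proof: reduce each block via a Gaussian/JL map to $m=O(\eps^{-2}\log n)$ coordinates, hash those coordinates into $O(\eps^{-3}k/b)$ buckets per JL dimension, estimate each block's norm as (a constant times) the median of the $m$ bucket magnitudes, and then threshold to the top $k/b$ blocks. Your observation that the $m$ JL coordinates double as the independent repetitions, avoiding an extra $\log n$ factor, is exactly the mechanism the paper exploits, and your $\eps^{-5}$ accounting matches.

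Two small points where you diverge slightly from the paper. First, the paper uses a single Gaussian matrix $\rho$ for all blocks rather than independent $R^{(i)}$, and it does not rely on the JL norm-preservation guarantee per se; instead it uses that each projected coordinate $y_{q,i}$ is individually Gaussian with standard deviation $\norm{2}{x_{T_q}}$, so that $\alpha\cdot\median_i\abs{y_{q,i}}$ is a consistent norm estimator. Your generic ``subgaussian'' JL plus median aggregation would need this distributional property to go through (a sparse sign JL, for instance, would not give a median proportional to the norm), so you should commit to Gaussian projections as the paper does. Second, the thresholding step is not quite a clean invocation of Lemma~\ref{lemma:l2thresholding}: the per-block error you derive has both an additive piece $O(\eps/\sqrt{k/b})\err{2}{k,b}{x}$ and a multiplicative piece $O(\eps)\tau_i$, and the lemma as stated only absorbs the additive piece. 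The paper handles this by redoing the thresholding argument inline (closer in spirit to Corollary~\ref{cor:linfthresholding}) with the extra $(1+O(\eps))$ factor tracked through; your ``line-by-line adaptation'' would need to do the same.
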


Once we know a good support $S$, we can run Algorithm~\ref{algsupport}
to estimate $x_S$:

\begin{theorem}
  For any $b$ and $k$, there exists a family of binary matrices $A$
  with $O(\frac{1}{\eps^2}k + \frac{k}{\eps^5b}\log n)$ rows such that
  we can recover a $(k,b)$-block-sparse $x'$ in $O(k + \frac{n}{\eps^2
    b}\log n)$ time with
  \[
  \norm{2}{x'-x} \leq (1+\eps)\err{2}{k,b}{x}
  \]
  with probability at least $1 - \frac{1}{k^{\Omega(1)}}$.
\end{theorem}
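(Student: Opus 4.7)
The plan is to chain the block heavy hitters primitive with the set query algorithm, exactly mirroring the Zipfian construction from the previous subsection. Concretely, I will first run the algorithm of Lemma~\ref{thm:block-heavy-hitters} (with $\eps$ replaced by $\eps/c_1$ for a suitable constant $c_1$) on $Ax$ to obtain a block support $S$ consisting of at most $k/b$ blocks, with $\abs{S} \le k$ and
\[
\norm{2}{x - x_S} \le (1 + \eps/c_1)\,\err{2}{k,b}{x},
\]
succeeding with probability $1 - n^{-\Omega(1)}$. I will then feed this $S$ into the set query algorithm of Theorem~\ref{main-theorem} (used with $\eps/c_1$ in place of $\eps$ and $\nu = 0$) to get $x'$ with $\supp(x') \subseteq S$ and
\[
\norm{2}{x' - x_S} \le (\eps/c_1)\,\norm{2}{x - x_S},
\]
with probability $1 - k^{-\Omega(1)}$.

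The two sketches are produced from independently sampled matrices and concatenated into one sketch matrix, so the total number of rows is $O\!\left(\tfrac{1}{\eps^2}k + \tfrac{k}{\eps^5 b}\log n\right)$, and the running time adds as $O(k) + O\!\left(\tfrac{n}{\eps^2 b}\log n\right)$, matching the claim. Because $S$ is a union of entire blocks, $x'$ is automatically $(k,b)$-block-sparse. The error bound then follows by a two-line triangle inequality computation:
\begin{align*}
\norm{2}{x' - x} &\le \norm{2}{x' - x_S} + \norm{2}{x - x_S}\\
 &\le \bigl(1 + \eps/c_1\bigr)\norm{2}{x - x_S}\\
 &\le \bigl(1 + \eps/c_1\bigr)^2 \err{2}{k,b}{x},
\end{align*}
which is at most $(1+\eps)\err{2}{k,b}{x}$ for, say, $c_1 = 3$ and $\eps$ sufficiently small (otherwise absorb constants by further shrinking $\eps$ inside the subroutines). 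A union bound over the two failure events gives overall success probability $1 - k^{-\Omega(1)} - n^{-\Omega(1)} = 1 - k^{-\Omega(1)}$, since we may assume $k \le n$.

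There is essentially no obstacle here beyond bookkeeping, since all the heavy lifting is done by Lemma~\ref{thm:block-heavy-hitters} (support identification) and Theorem~\ref{main-theorem} (value estimation on a known support). The only points requiring any care are: (i) verifying that the output of the block-heavy-hitters procedure is a union of blocks so that $x'$ is genuinely $(k,b)$-block-sparse; (ii) choosing the internal $\eps$-parameters of the two subroutines so the product $(1+\eps/c_1)^2$ fits under $1+\eps$; and (iii) confirming that the column-sparsity and row counts combine to give a binary matrix of the claimed size (which follows from the fact that both subsketches are binary, up to the standard trick from Theorem~\ref{main-theorem} of splitting $\pm 1$ rows into two $\{0,1\}$ rows).
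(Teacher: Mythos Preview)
Your proposal is correct and essentially identical to the paper's own proof: apply Lemma~\ref{thm:block-heavy-hitters} with parameter $\eps/3$ to obtain the block support $S$, then apply Theorem~\ref{main-theorem} with parameter $\eps/3$ to estimate $x_S$, and combine via the triangle inequality $(1+\eps/3)^2 \le 1+\eps$. The paper is slightly terser and omits the explicit remarks you make about block-sparsity of $x'$, the row/time accounting, and the union bound on failure probabilities, but the argument is the same.
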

\begin{proof}
  Let $S$ be the result of Lemma~\ref{thm:block-heavy-hitters} with
  approximation $\eps/3$, so
  \[
  \norm{2}{x-x_S} \leq (1+\frac{\eps}{3})\err{2}{k,b}{x}.
  \]
  Then the set query algorithm on $x$ and $S$ uses $O(k/\eps^2)$ rows
  to return an $x'$ with
  \[
  \norm{2}{x' - x_S} \leq \frac{\eps}{3}\norm{2}{x-x_S}.
  \]
  Therefore
  \begin{align*}
    \norm{2}{x'-x} &\leq \norm{2}{x' - x_S}  + \norm{2}{x - x_S}\\
    &\leq (1 + \frac{\eps}{3}) \norm{2}{x-x_S}\\
    &\leq (1 + \frac{\eps}{3})^2\err{2}{k,b}{x}\\
    &\leq (1 + \eps)\err{2}{k,b}{x}
  \end{align*}
  as desired.
\end{proof}

If the block size $b$ is at least $\log n$ and $\eps$ is constant,
this gives an optimal bound of $O(k)$ rows.

\section{Conclusion and Future Work}

We show efficient recovery of vectors conforming to Zipfian or block
sparse models, but leave open extending this to other models.  Our
framework decomposes the task into first locating the heavy hitters
and then estimating them, and our set query algorithm is an efficient
general solution for estimating the heavy hitters once found.  The
remaining task is to efficiently locate heavy hitters in other models.

Our analysis assumes that the columns of $A$ are fully independent.
It would be valuable to reduce the independence needed, and hence the
space required to store $A$.

We show $k$-sparse recovery of Zipfian distributions with $1 + o(1)$
approximation in $O(k \log n)$ space.  Can the $o(1)$ be made smaller,
or a lower bound shown, for this problem?

\section*{Acknowledgments}
I would like to thank my advisor Piotr Indyk for much helpful advice, Anna
Gilbert for some preliminary discussions, and Joseph O'Rourke for
pointing me to~\cite{randomhypergraph}.

\bibliographystyle{alpha}

\bibliography{setquery}

\appendix

\section{Negative Dependence}\label{app:negative}

Negative dependence is a fairly common property in balls-and-bins
types of problems, and can often cleanly be analyzed using the
framework of \emph{negative association} (\cite{Dubhashi96ballsand,Dubhashi96negativedependence,joagdevproschan}).

\begin{definition}[Negative Association]
  Let $(X_1, \dotsc, X_n)$ be a vector of random variables.  Then
  $(X_1, \dotsc, X_n)$ are \emph{negatively associated} if for every
  two disjoint index sets, $I, J \subseteq [n]$,
  \begin{align*}
    &\E[f(X_i, i \in I)g(X_j, j \in J)] \\\leq& \E[f(X_i, i \in I)]E[g(X_j, j \in J)]
  \end{align*}
  for all functions $f\colon \R^{\abs{I}} \to \R$ and $g \colon
  \R^{\abs{J}} \to \R$ that are both non-decreasing or both
  non-increasing.
\end{definition}

If random variables are negatively associated then one can apply most
standard concentration of measure arguments, such as Chebyshev's
inequality and the Chernoff bound.  This means it is a fairly strong
property, which makes it hard to prove directly.  What makes it so
useful is that it remains true under two composition rules:

\begin{lemma}[\cite{Dubhashi96ballsand}, Proposition 7]\label{negativeprop}
  ~
  \begin{enumerate}
  \item If $(X_1, \dotsc, X_n)$ and $(Y_1, \dotsc, Y_m)$ are each
    negatively associated and mutually independent, then $(X_1,
    \dotsc, X_n, Y_1, \dotsc, Y_m)$ is negatively associated.
  \item Suppose $(X_1, \dotsc, X_n)$ is negatively associated.  Let
    $I_1, \dotsc, I_k \subseteq [n]$ be disjoint index sets, for some
    positive integer $k$.  For $j \in [k]$, let $h_j \colon
    \R^{\abs{I_j}} \to \R$ be functions that are all non-decreasing or
    all non-increasing, and define $Y_j = h_j(X_i, i \in I_j)$.  Then
    $(Y_1, \dotsc, Y_k)$ is also negatively associated.
  \end{enumerate}
\end{lemma}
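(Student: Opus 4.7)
The plan is to verify each part directly from the definition of negative association, using the NA hypothesis of the inner variables at the appropriate place.

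For part (1), I would take arbitrary disjoint index sets $I, J \subseteq [n+m]$ and two monotone functions $f, g$ of the same direction. Split each set by whether it points into the $X$-block or the $Y$-block: $I = I_X \sqcup I_Y$ and $J = J_X \sqcup J_Y$, where $I_X, J_X$ are disjoint subsets of $[n]$ and $I_Y, J_Y$ are disjoint subsets of $[m]$. First condition on $Y$. Since $X$ is independent of $Y$, the functions $x \mapsto f(x_{I_X}, Y_{I_Y})$ and $x \mapsto g(x_{J_X}, Y_{J_Y})$ are (for fixed $Y$) monotone in the same direction and depend on the disjoint index sets $I_X, J_X$, so the NA of $X$ yields $\E[fg \mid Y] \leq \E_X[f \mid Y] \cdot \E_X[g \mid Y]$. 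The two factors on the right are functions of $Y_{I_Y}$ and $Y_{J_Y}$ respectively; call them $F(Y_{I_Y})$ and $G(Y_{J_Y})$. Partial expectation preserves monotonicity in the surviving arguments, so $F$ and $G$ are monotone in the same direction as $f$ and $g$. Then apply NA of $Y$ on the disjoint sets $I_Y, J_Y$ to obtain $\E[FG] \leq \E[F]\E[G] = \E[f]\E[g]$, which together with the conditional inequality closes the bound.

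For part (2), I would take arbitrary disjoint $A, B \subseteq [k]$ and monotone $f, g$ of the same direction, and define $F(X) = f(h_j(X_{I_j}) : j \in A)$ and $G(X) = g(h_j(X_{I_j}) : j \in B)$. Because the $I_j$ are pairwise disjoint, $F$ depends on $X$ only through coordinates in $\bigcup_{j\in A} I_j$ and $G$ only through coordinates in $\bigcup_{j\in B} I_j$, and these two unions are disjoint. Monotonicity survives the composition: if the $h_j$'s are all non-decreasing then $F$ and $G$ inherit the direction of $f$ and $g$, while if the $h_j$'s are all non-increasing then both $F$ and $G$ reverse direction, which is still a common direction. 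Applying the NA of $X$ to $F$ and $G$ gives $\E[FG] \leq \E[F]\E[G]$, which is precisely the NA inequality for $(Y_1, \dotsc, Y_k)$ on the index sets $A, B$.

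The main subtlety, in both parts, is the monotonicity bookkeeping when one or both of $f$ and $g$ are non-increasing, and the fact that taking a partial expectation (as in part (1)) does not destroy monotonicity in the remaining free arguments because one may differentiate through or, more elementarily, apply the componentwise comparison of any two fixed profiles of the surviving variables. Everything else is a careful translation of disjointness of the original index sets into disjointness of the underlying $X$-indices, which is immediate once the sets are split. I do not anticipate any obstacle beyond this bookkeeping; no probabilistic inequality beyond the NA hypotheses themselves is required.
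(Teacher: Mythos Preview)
Your argument is correct and is the standard proof of these closure properties. However, the paper does not actually prove this lemma: it is stated as Proposition~7 of~\cite{Dubhashi96ballsand} and used as a black box, with no proof given in the paper itself. So there is nothing in the paper to compare your proposal against; you have supplied a (valid) proof where the paper simply cites one.
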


Lemma~\ref{negativeprop} allows us to relatively easily show that one
component of our error (the point error) is negatively associated
without performing any computation.  Unfortunately, the other
component of our error (the component size) is not easily built up by
repeated applications of Lemma~\ref{negativeprop}\footnote{This paper
  considers the component size of each hyperedge, which clearly is not
  negatively associated: if one hyperedge is in a component of size
  $k$ than so is every other hyperedge.  But one can consider variants
  that just consider the distribution of component sizes, which seems
  plausibly negatively associated.  However, this is hard to prove.}.
Therefore we show something much weaker for this error, namely
\emph{approximate negative correlation}:
\[
\E[X_iX_j] - \E[X_i]E[X_j] \leq \frac{1}{k^{\Omega(1)}} \E[X_i]\E[X_j]
\]
for all $i \neq j$.  This is still strong enough to use Chebyshev's
inequality.

\section{Set Query in the $\ell_1$ norm}\label{app:l1}

This section works through all the changes to prove the set query
algorithm works in the $\ell_1$ norm with $w = O(\frac{1}{\eps}k)$
measurements.

We use Lemma~\ref{thm:pointdeviation} to get an $\ell_1$ analog of
Corollary~\ref{thm:totalerror}:

\begin{align}
  \norm{1}{x'-x_S} &= \sum_{i \in S} \abs{(x'-x_S)_i} \\
  \notag &\leq \sum_{i \in S} 2\sum_{j \in S} C_{i,j} \abs{Y_j} = 2\sum_{i \in S} D_i\abs{Y_i}.
\end{align}

Then we bound the expectation, variance, and covariance of $D_i$ and
$\abs{Y_i}$.  The bound on $D_i$ works the same as in
Section~\ref{sec:componentsize}: $\E[D_i] = O(1)$, $\E[D_i^2] =
O(1)$, $\E[D_iD_j] - \E[D_i]^2 \leq O(\log^4 k / \sqrt{k})$.

The bound on $\abs{Y_i}$ is slightly different.  We define
\[
U_q' = \abs{\nu_q} + \sum_{i \in [n] \setminus S} \abs{x_i}B_{i,q}
\]
and observe that $U_q' \geq \abs{V_q}$, and $U_q'$ is NA.  Hence
\[
Z_i' = \median_{q\in L_i} U_q'
\]
is NA, and $\abs{Y_i} \leq Z_i'$.  Define
\begin{align*}
  \mu &= \E[U_q'] = \frac{d}{w}\norm{1}{x-x_S} + \frac{1}{w}\norm{1}{\nu} \\
  &\leq \frac{\eps}{k}(\norm{1}{x-x_S} + \norm{1}{\nu})
\end{align*}
then
\[
\Pr[Z_i' \geq c\mu] \leq 2^{\abs{L_i}}(\frac{1}{c})^{\abs{L_i}/2} \leq \left(\frac{4}{c}\right)^{d-2}
\]
so $\E[Z_i'] = O(\mu)$ and $\E[Z_i'^2] = O(\mu^2)$.

Now we will show the analog of Section~\ref{sec:wrapping}.  We know
\[
\norm{2}{x'-x_S} \leq 2\sum_i D_iZ_i'
\]
and
\[
\E[2\sum_i D_iZ_i'] = 2 \sum_i \E[D_i]\E[Z_i'] = k\mu'
\]
for some $\mu' = O(\frac{\eps}{k}(\norm{1}{x-x_S} + \norm{1}{\nu}))$.
Then
\begin{align*}
  \E[(\sum D_iZ_i')^2] &= \sum_i\E[D_i^2]\E[Z_i'^2] + \sum_{i \neq j} \E[D_iD_j]\E[Z_i'Z_j']\\
  &\leq \sum_i O(\mu'^2) + \sum_{i \neq j} (\E[D_i]^2 + O(\log^4 k / \sqrt{k}))\E[Z_i']^2\\
  &= O(\mu'^2k\sqrt{k}\log^4 k) + k(k-1)\E[D_iZ_i']^2\\
  \mbox{Var}(2\sum_i Z_i'D_i) &\leq O(\mu'^2k\sqrt{k}\log^4 k).
\end{align*}
By Chebyshev's inequality, we get
\[
\Pr[\norm{1}{x'-x_S} \geq (1+\alpha)k\mu'] \leq O(\frac{\log^4 k}{\alpha^2\sqrt{k}})
\]
and the main theorem (for constant $c = 1/3$) follows.  The parallel
repetition method of Section~\ref{sec:wrapping} works the same as in
the $\ell_2$ case to support arbitrary $c$.

\section{Block Heavy Hitters}\label{app:strengthenheavyhitters}

\newtheorem*{oldlemma5}{Lemma \ref{thm:block-heavy-hitters}}
\begin{oldlemma5}
  For any $b$ and $k$, there exists a family of matrices $A$ with
  $O(\frac{k}{\eps^5b}\log n)$ rows and column sparsity $O(\frac{1}{\eps^2}\log n)$
  such that we can recover a support $S$ from $Ax$ in $O(\frac{n}{\eps^2
    b}\log n)$ time with
  \[
  \norm{2}{x-x_S} \leq (1+\eps)\err{2}{k,b}{x}
  \]
  with probability at least $1 - n^{-\Omega(1)}$.
\end{oldlemma5}

\begin{proof}
  This proof follows the method of~\cite{block-heavy-hitters}, but
  applies to the $\ell_2$ norm and is in the (slightly stronger)
  sparse recovery framework rather than the heavy hitters framework.
  The idea is to perform dimensionality reduction, then use an
  argument similar to those for Count-Sketch (first in~\cite{CM06},
  but we follow more closely the description in~\cite{GI}).

  Define $s = k/b$ and $t = n/b$, and decompose $[n]$ into equal sized
  blocks $T_1, \dotsc, T_t$.  Let $x_{(T_i)} \in \R^b$ denote the
  restriction of $x_{T_i}$ to the coordinates $T_i$.  Let $U \subseteq
  [t]$ have $\abs{U}=s$ and contain the $s$ largest blocks in $x$, so
  $\err{2}{k,b}{x} = \norm{2}{\sum_{i \notin U} x_{T_i}}$.

  Choose an i.i.d. standard Gaussian matrix $\rho \in \R^{m\times b}$
  for $m = O(\frac{1}{\eps^2} \log n)$.  Define $y_{q,i} = (\rho
  x_{(T_q)})_i$, so as a distribution over $\rho$, $y_{q,i}$ is a
  Gaussian with variance $\norm{2}{x_{(T_q)}}^2$.

  Let $h_1, \dotsc, h_m\colon [t] \to [l]$ be pairwise independent
  hash functions for some $l = O(\frac{1}{\eps^3}s)$, and $g_1,
  \dotsc, g_m\colon [t] \to \{-1,1\}$ also be pairwise independent.
  Then we make $m$ hash tables $H^{(1)}, \dotsc, H^{(m)}$ of size $l$
  each, and say that the value of the $j$th cell in the $i$th hash
  table $H^{(i)}$ is given by
  \[
  H^{(i)}_j = \sum_{q: h_i(q) = j} g_i(q)y_{q,i}
  \]
  Then the $H^{(i)}_j$ form a linear sketch of $ml = O(\frac{k}{\eps^5
    b}\log n)$ cells.  We use this sketch to estimate the mass of each
  block, and output the blocks that we estimate to have the highest
  mass.  Our estimator for $\norm{2}{x_{T_i}}$ is
  \[
  z_i' = \alpha\median_{j \in [m]} \abs{H^{(j)}_{h_j(i)}}
  \]
  for some constant scaling factor $\alpha \approx 1.48$.  Since we
  only care which blocks have the largest magnitude, we don't actually
  need to use $\alpha$.

  We first claim that for each $i$ and $j$ with probability $1 -
  O(\eps)$, $(H^{(j)}_{h_j(i)} - y_{i,j})^2 \leq O(\frac{\eps^2}{s}
  (\err{2}{k,b}{x})^2)$.  To prove it, note that the probability any
  $q \in U$ with $q \neq i$ having $h_j(q) = h_j(i)$ is at most
  $\frac{s}{l} \leq \eps^3$.  If such a collision with a heavy hitter
  does not happen, then
  \begin{align*}
    \E[(H^{(j)}_{h_j(i)} - y_{i,j})^2] &= \E[\sum_{p \neq i, h_j(p) = h_j(i)} y_{p,j}^2]\\
    &\leq \sum_{p \notin U} \frac{1}{l}\E[y_{p,j}^2]\\
    &= \frac{1}{l}\sum_{p \notin U} \norm{2}{x_{T_p}}^2\\
    &= \frac{1}{l}(\err{2}{k,b}{x})^2
  \end{align*}
  By Markov's inequality and the union bound, we have
  \[
  \Pr[(H^{(j)}_{h_j(i)} - y_{i,j})^2 \geq \frac{\eps^2}{s}(\err{2}{k,b}{x})^2] \leq \eps + \eps^3 = O(\eps)
  \]

  Let $B_{i,j}$ be the event that $(H^{(j)}_{h_j(i)} - y_{i,j})^2 >
  O(\frac{\eps^2}{s} (\err{2}{k,b}{x})^2)$, so $\Pr[B_{i,j}] =
  O(\eps)$.  This is independent for each $j$, so by the Chernoff
  bound $\sum_{j=1}^m B_{i,j} \leq O(\eps m)$ with high probability in
  $n$.

  Now, $\abs{y_{i,j}}$ is distributed according to the positive half
  of a Gaussian, so there is some constant $\alpha \approx 1.48$ such
  that $\alpha \abs{y_{i,j}}$ is an unbiased estimator for
  $\norm{2}{x_{T_i}}$.  For any $C\geq 1$ and some $\delta =
  O(C\eps)$, we expect less than $\frac{1-C\eps}{2}m$ of the
  $\alpha\abs{y_{i,j}}$ to be below $(1-\delta)\norm{2}{x_{T_i}}$,
  less than $\frac{1-C\eps}{2}m$ to be above
  $(1+\delta)\norm{2}{x_{T_i}}$, and more than $C\eps m$ to be in
  between.  Because $m \geq \Omega(\frac{1}{\eps^2}\log n)$, the
  Chernoff bound shows that with high probability the actual number of
  $\alpha\abs{y_{i,j}}$ in each interval is within
  $\frac{\eps}{2} m = O(\frac{1}{\eps}\log n)$ of its expectation.  Hence
  \[
  \abs{\norm{2}{x_{T_i}} - \alpha \median_{j\in [m]} \abs{y_{i,j}}} \leq \delta\norm{2}{x_{T_i}} = O(C\eps) \norm{2}{x_{T_i}}.
  \]
  even if $\frac{(C-1)\eps}{2} m$ of the $y_{i,j}$ were adversarially
  modified.  We can think of the events $B_{i,j}$ as being such
  adversarial modifications.  We find that
  \begin{align*}
    \abs{\norm{2}{x_{T_i}} - z_i} &= \abs{\norm{2}{x_{T_i}} - \alpha \median_{j\in [m]} \abs{H_{h_j(i)}^{(j)}}} \\
    &\leq O(\eps) \norm{2}{x_{T_i}} + O(\frac{\eps}{\sqrt{s}}\err{2}{k,b}{x}).
  \end{align*}

  \[
  (\norm{2}{x_{T_i}} - z_i)^2 \leq O(\eps^2\norm{2}{x_{T_i}}^2 + \frac{\eps^2}{s}(\err{2}{k,b}{x})^2)
  \]

  Define $w_i = \norm{2}{x_{T_i}}$, $\mu = \err{2}{k,b}{x}$, and
  $\hat{U}\subseteq [t]$ to contain the $s$ largest coordinates in
  $z$.  Since $z$ is computed from the sketch, the recovery algorithm
  can compute $\hat{U}$.  The output of our algorithm will be the
  blocks corresponding to $\hat{U}$.

  We know $\mu^2 = \sum_{i \notin U} w_i^2 = \norm{2}{w_{[t]\setminus U}}^2$ and $ \abs{w_i - z_i} \leq
  O(\eps w_i + \frac{\eps}{\sqrt{s}}\mu)$ for all $i$.  We will show
  that
  \[
  \norm{2}{w_{[t] \setminus \hat{U}}}^2 \leq (1+O(\eps))\mu^2.
  \]
  This is analogous to the proof of Count-Sketch, or to
  Corollary~\ref{cor:linfthresholding}.  Note that
  \begin{align*}
    \norm{2}{w_{[t] \setminus \hat{U}}}^2 &= \norm{2}{w_{U \setminus \hat{U}}}^2 + \norm{2}{w_{[t]\setminus(U\cup \hat{U})}}^2
  \end{align*}
  For any $i \in U \setminus \hat{U}$ and $j \in \hat{U} \setminus U$, we have $z_j >
  z_i$, so
  \[
  w_i - w_j \leq O(\frac{\eps}{\sqrt{s}}\mu + \eps w_i)
  \]
  Let $a = \max_{i\in U \setminus \hat{U}} w_i$ and $b = \min_{j \in \hat{U} \setminus
    U} w_j$.  Then $a \leq b + O(\frac{\eps}{\sqrt{s}}\mu + \eps a)$,
  and dividing by $(1-O(\eps))$ we get $a \leq b(1+O(\eps)) +
  O(\frac{\eps}{\sqrt{s}}\mu)$.  Furthermore
  $\norm{2}{w_{\hat{U}\setminus U}}^2 \geq b^2\abs{\hat{U}\setminus U}$, so
  \begin{align*}
    \norm{2}{w_{U \setminus \hat{U}}}^2 \leq& \left(\norm{2}{w_{\hat{U}\setminus U}} \frac{1+O(\eps)}{\sqrt{\abs{\hat{U}\setminus U}}} + O(\frac{\eps}{\sqrt{s}}\mu)\right)^2\abs{\hat{U}\setminus U}\\
    \leq& \left(\norm{2}{w_{\hat{U}\setminus U}} (1+O(\eps)) + O(\eps\mu)\right)^2\\
    =& \norm{2}{w_{\hat{U}\setminus U}}^2(1+O(\eps)) + (2+O(\eps))\norm{2}{w_{\hat{U}\setminus U}}O(\eps\mu) \\& + O(\eps^2\mu^2)\\
    \leq& \norm{2}{w_{\hat{U}\setminus U}}^2 + O(\eps\mu^2)
  \end{align*}
  because $\norm{2}{w_{\hat{U}\setminus U}} \leq \mu$.  Thus
  \begin{align*}
\norm{2}{w - w_{\hat{U}}} = \norm{2}{w_{[t] \setminus \hat{U}}}^2
&\leq O(\eps \mu^2) + \norm{2}{w_{\hat{U}\setminus U}}^2 +
\norm{2}{w_{[t]\setminus(U\cup \hat{U})}}^2\\ 
&= O(\eps \mu^2) + \mu^2 = (1+O(\eps))\mu^2.
  \end{align*}
  This is exactly what we want.  If $S = \bigcup_{i\in \hat{U}} T_i$
  contains the blocks corresponding to $\hat{U}$, then
  \[
  \norm{2}{x - x_S} = \norm{2}{w - w_{\hat{U}}} \leq (1+O(\eps))\mu = (1+O(\eps))\err{2}{k,b}{x}
  \]
  Rescale $\eps$ to change $1+O(\eps)$ into $1+\eps$ and we're done.
\end{proof}

\end{document}